\newcommand{\At}{\mathcal{A}_t}
\newcommand{\St}{S_t}
\newcommand{\MBk}{MB_{k}}
\newcommand{\MBkhat}{\widehat{N}_{k}}
\newcommand{\MBkthat}{\widehat{N}_{kt}}
\newcommand{\MBjthat}{\widehat{N}_{jt}}
\newcommand{\Lkt}{L_{kt}}
\newcommand{\Ltj}{L_{tj}}
\newcommand{\Rkt}{R_{kt}}
\newcommand{\rkt}{r_{kt}}
\newcommand{\Rjt}{R_{jt}}
\newcommand{\Rlt}{R_{\ell t}}
\newcommand{\rlt}{r_{\ell t}}
\newcommand{\M}{\mathbf{M}}
\newcommand{\B}{\mathbf{B}}
\newcommand{\eps}{\epsilon}
\newcommand{\X}{\mathbf{X}}
\newcommand{\iid}{\stackrel{\rm i.i.d.}{\sim}}
\newcommand{\E}{\mathbb{E}}
\newcommand{\V}{\mathbb{V}}
\newcommand{\var}{{\rm Var}}
\newcommand{\indep}{\mbox{$\perp\!\!\!\perp$}}
\newcommand{\norm}[1]{\left\lVert#1\right\rVert}
\theoremstyle{plain}
\newtheorem{theorem}{Theorem}[section]
\newtheorem{lemma}[theorem]{Lemma}
\theoremstyle{definition}
\newtheorem{definition}[theorem]{Definition}
\newtheorem{assumption}[theorem]{Assumption}
\theoremstyle{remark}
\newtheorem{remark}[theorem]{Remark}
\title{Sequentially learning the topological ordering of causal directed acyclic graphs with likelihood ratio scores}
\author{%
  Gabriel Ruiz\thanks{Gabriel would like to thank NSF-GRFP DGE-1650604 for financial support.} \\
  Department of Statistics\\
  University of California, Los Angeles\\
  Los Angeles, CA 90095\\
  \texttt{ruizg@ucla.edu} \\
   \And
  Oscar Hernan Madrid Padilla\\
  Department of Statistics\\
  University of California, Los Angeles\\
  Los Angeles, CA 90095\\
  \texttt{oscar.madrid@stat.ucla.edu} \\
   \AND
  Qing Zhou\\
  Department of Statistics\\
  University of California, Los Angeles\\
  Los Angeles, CA 90095\\
  \texttt{zhou@stat.ucla.edu} \\
}
\begin{document}

\maketitle

\begin{abstract}
Causal discovery, the learning of causality in a data mining scenario, has been of strong scientific and theoretical interest as a starting point to identify ``what causes what?'' Contingent on assumptions and a proper learning algorithm, it is sometimes possible to identify and accurately estimate a causal directed acyclic graph (DAG), as opposed to a Markov equivalence class of graphs that gives ambiguity of causal directions. The focus of this paper is in highlighting the identifiability and estimation of DAGs with general error distributions through a general sequential sorting procedure that orders variables one at a time, starting at root nodes, followed by children of the root nodes, and so on until completion. We demonstrate a novel application of this general approach to estimate the topological ordering of a DAG. At each step of the procedure, only simple likelihood ratio scores are calculated on regression residuals to decide the next node to append to the current partial ordering. The computational complexity of our algorithm on a $p$-node problem is $\mathcal{O}(pd)$, where $d$ is the maximum neighborhood size. Under mild assumptions, the population version of our procedure provably identifies a true ordering of the underlying DAG. We provide extensive numerical evidence to demonstrate that this sequential procedure scales to possibly thousands of nodes and works well for high-dimensional data. We accompany these numerical experiments with an application to a single-cell gene expression dataset.
\end{abstract}

\section{Introduction}

With observational data alone, causal inference using an accurate directed acyclic graph (DAG) has been shown to provide results that are up to par with the quintessential randomized controlled experiment 
\citep{pearlCausality,pmlr-v89-malinsky19b_poCalc}. However, it is difficult to imagine that this approach, with its reliance on strong domain knowledge about the system of variables at hand, can be applied to cases with large numbers of variables and little background on how they are all related, such as in bioinformatics and fields of science where ``big data” was previously unavailable and we are now trying to get a grasp of it. On the other hand, causal discovery—learning the DAG structure for Bayesian networks from scratch—has its own pitfalls, such as a super-exponentially increasing number of networks in the search space as the number of nodes grows and the fact that it is possible for multiple directed acyclic graphs and the Bayesian networks they encode to map to the same joint distribution--a phenomenon called Markov equivalence \citep{Peters2017}. Nonetheless, a large  effort has been devoted to the structure learning of DAGs as a preliminary data mining step in the scientific pipeline.

The present paper provides discussion of a general learning algorithm, which we show can be quite scaleable, along with novel identification theory for a specific application. The main task of the approach we advocate for is to sequentially estimate a topological ordering of the DAG, a permutation of node labels such that every parent must precede its children. To help with scalability in practice, we also make use of a priori known neighborhood sets, such as a Markov blanket of a node. In order to demonstrate the theoretical promise of this procedure, we discuss existing identification results that make use of it. We also provide new theory for a linear structural equation model (SEM) first studied in \citet{JMLR:v7:shimizu06a}. The novelty of our application of the sequential sorting procedure to this SEM compared to the state-of-the-art for it is the scalability of our procedure to a large number of nodes in the underlying graphical model.


Representative methods for causal discovery under the assumption of no unobserved confounding (causal sufficiency) 
include the Peter-Clark (PC) algorithm \citep{pcAlgo1991} and Greedy Equivalence Search (GES) \citep{chickering2002}. The PC algorithm is a constraint-based method due to its use of conditional independence queries, while GES is considered a score-based method for the objective function it seeks to optimize across the space of graphical models. Without additional structural assumptions, the best these methods can generally do in the limit of sample size ($n\to\infty$) is to obtain a Markov equivalence class (MEC) of DAGs, visualized typically by a single Completed Paritially Directed Acyclic Graph (CPDAG) as in Figure \ref{cpdagExample}. Each DAG in the MEC, obtainable by orienting undirected edges in the CPDAG without introducing a cyclic path nor a ``v-structure,'' encodes the same set of \textit{d-separation} relations that imply marginal and conditional independence relations between triplets of variable subsets in their underlying joint distribution \citep{pcAlgo1991}. 

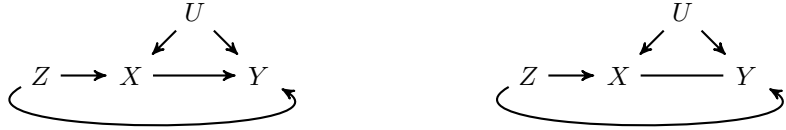
\begin{figure}%
\centering
\begin{tikzpicture}[-,>=stealth',shorten >=1pt,auto,node distance=1.2cm,
thick,main node/.style={circle,fill=blue!20,draw,font=\sffamily\Large\bfseries}]
\node (A) {$U$};
\node (X) [below left of=A] {$X$};
\node (Z) [left of=X] {$Z$};
\node (Y) [below right of=A] {$Y$};
\draw[->] (A) -- (X) ;
\draw[->] (A) --  (Y);
\draw[->] (X) --  (Y);
\draw[->] (Z) -- (X);
\draw[->] (Z) to [out=-150,in=-30] (Y);
\end{tikzpicture}\qquad
\begin{tikzpicture}[-,>=stealth',shorten >=1pt,auto,node distance=1.2cm,
  thick,main node/.style={circle,fill=blue!20,draw,font=\sffamily\Large\bfseries}]
  \node (A) {$U$};
  \node (X) [below left of=A] {$X$};
  \node (Z) [left of=X] {$Z$};
  \node (Y) [below right of=A] {$Y$};
  \draw[->] (A) -- (X) ;
  \draw[->] (A) --  (Y);
  \draw[-] (X) --  (Y);
  \draw[->] (Z) -- (X);
  \draw[->] (Z) to [out=-150,in=-30] (Y);
\end{tikzpicture}
\caption{The original DAG (left) and its corresponding CPDAG (right), obtained by keeping the orientation of edges corresponding to the v-structures $Z\to X\leftarrow U$ and $Z\to Y\leftarrow U$, and removing the orientation from all other edges. Note the ambiguity about the causal direction $X\to Y$ vs. $X\leftarrow Y$ in the CPDAG.\label{cpdagExample}}
\end{figure}

When additional assumptions are justified, such as strict non-linearity of structural equations, or non-Gaussianity of noise terms in a linear structural equation model, a unique DAG can be identified \citep{buhlmann2014,JMLR:v7:shimizu06a}. When we are not willing to make the assumption of causal sufficiency, the Fast Causal Inference (FCI) algorithm provides an alternative at the cost of a potentially less precise, though possibly more accurate, graphical model compared to a DAG or CPDAG \citep{spirtes2000constructing}. Beyond what we highlight here for the case of iid samples from a distribution that our DAG of interest satisfies the Markov property with respect to, \citet{reviewShort} and \citet{Peters2017} provide reviews on the trade-offs of different algorithms and what can and cannot be done when there is additional structure, such as the case that the system of variables varies in time. In the context of Earth system sciences, \citet{rungeClimate} review causal discovery methods. Structure learning has also been explored for its possibility to explain the black-box nature of state-of-the-art deep learning architectures \citep{sani_blackbox}. 
Moreover, \citet{ZhengNoTears2018_NEURIPS_e347c514} and its extension to \citet{zheng20aSparseNonParamDAGs} provide an approach to optimize a non-convex score function in DAG space by using a smooth characterization of an adjacency matrix's acyclicity constraint. 

\subsection{Review of relevant work} 

Specific to our task of learning a topological ordering for an underlying acyclic graphical model, we now review some relevant work. Let us first formally define a so-called topological ordering of a DAG, the target parameter we seek to estimate. Let $[m]=\{1,\dots,m\}$ for integer $m\geq 1$. 
\begin{definition}\label{defn:perm} 
A topological ordering for a DAG $\mathcal{G}$ is given by a permutation 
$\pi:[p]\to[p]$
such that every parent node precedes its child in the ordering: 
\[
  j\in PA_k\implies \pi^{-1}(j)<\pi^{-1}(k). 
\]
\end{definition}

Importantly, we note that the discrete search space across $p!$ permutation functions in search of one that satisfies Definition \ref{defn:perm} can be quite cumbersome \citep{raskuttiUhler2018SparsePerm,solus2021consistency}. Several heuristic score-based methods have been developed to cope with the search space, however, it remains the case that score-based approaches for ordering search are NP hard in general \citep{Chickering1996,yeSimAnnealing}. Along these lines, recent work by \citep{yeSimAnnealing} provides one approach under the case of a linear Bayesian network with Gaussian noise. The similar non-parametric approach of \citet{solus2021consistency} and \citet{wangSolusYangUhler2017} requires a consistent conditional independence testing procedure to decide the presence or absence of an edge in the DAG corresponding to a given $\tilde{\pi}$ in the search space. The empirical results of these approaches are all promising. However, these methods do not contain an application to more than $100$ nodes. Moreover, these works do not provide guarantees on whether the search can terminate at a point well before querying all permutations or DAGs in order to achieve optimal (statistically consistent) results.

Complementary to these advances, we here study a simple approach for which the search has a pre-determined number of steps: $\mathcal{O}(pd)$ in the number of least squares residual updates, where $d\leq p$ is the maximum neighborhood size of a node. The objective of our work is to estimate one such permutation $\pi$ from observed data $\X\in\mathbb{R}^{n\times p}$ of sample size $n$. 
We denote this estimate as $\hat{\pi}$. To do so, we will apply Algorithm~\ref{alg:continueOrdering} for $t=1,2,\dots,p$ until all nodes are sorted. At step $t$, given an input partial ordering $\At=(\hat{\pi}(1),\ldots,\hat{\pi}(t-1))$, the algorithm identifies a node $\hat{\pi}(t)\notin\At$ to append to the estimated ordering by maximizing the score $\mathcal{S}(k,\At;\X)$.

\begin{algorithm}\ \\
\caption{Continue a Topological Ordering}\label{alg:continueOrdering}
\KwData{The partial ordering $\At$ and data matrix $\X\in\mathbb{R}^{n\times p}$}
\KwResult{The continued partial ordering $\mathcal{A}_{t+1}$}
\For{$k\not\in\At$}{
$s_k \gets \mathcal{S}(k,\At;\X)$\\
}
$\hat{\pi}(t)\gets\arg\max_{k\not\in\At} s_k$\\
$\mathcal{A}_{t+1}\gets \At\cup\{\hat{\pi}(t)\}$
\end{algorithm}

There exist structure learning methods that use the general approach in Algorithm \ref{alg:continueOrdering} to sequentially construct a topological ordering. These approaches motivate our present work and include the following. \citet{JMLR:v15:peters14a} apply Algorithm \ref{alg:continueOrdering} under an assumption of strictly nonlinear structural equations with additive noise. Meanwhile \citet{ghoshalPolynomialTime}, \citet{park_kim_2020}, and \citet{chenDrtonWang} apply this sequential sorting procedure under a bounded conditional variance assumption: $a\leq \V[X_j|X_{PA_j}]\leq b$ for each $j\in[p]$ and some unknown positive constants $a\leq b$ restricted by the signal a parent sends its child node. \citet{park_kim_2020} can be considered the most general of the three similar approaches as it contains an extended discussion on the case of a node's possibly non-linear relation with its parents. \citet{nonParamRestrCondVar_NEURIPS2020_85c9f9ef} further explore the scaleability for the sequential application of Algorithm \ref{alg:continueOrdering} to estimate non-linear structural equation models under this bounded conditional variance assumption. With respect to linear SEMs, applications of Algorithm \ref{alg:continueOrdering} include \citet{shimizu11}, \citet{pairwiseLikelihoodRatios}, and \citet{wangHighDLingam}, while \citet{ZENG2020130} construct the topological ordering in reverse starting with child-less nodes. We believe there exists potential to scale up the estimation of each of these models. We focus on the linear SEM here.


\subsection{Paper Contribution and Outline}

Our application of Algorithm \ref{alg:continueOrdering} to a linear SEM with non-Gaussian noise under causal sufficiency, a model known as the linear non-Gaussian acyclic model (LiNGAM). In terms of theoretical guarantees for the estimation of LiNGAM, \citet{JMLR:v7:shimizu06a} and \citet{shimizu11} provide identifiabiliy results for the respective LiNGAM learning procedures--that is, with  knowledge of the true distribution defined by the LiNGAM and an oracle for conditional independence queries in the case of latter. Meanwhile \citet{wangHighDLingam} provide formal statistical consistency results for their LiNGAM-learning procedure. 

{Although the above methods have very nice theoretical guarantees, their practical application is limited as they do not presently scale well to large graphs, say with thousands of nodes, as confirmed in the numerical results in this paper. Therefore, we develop a fast sequential learning method that can estimate large graphs in practice. At each step of this method, a node is selected to append to a partial ordering, so that after $p$ steps, where $p$ is the number of nodes in the underlying graph, a full ordering of all the nodes will be produced.  Compared to the existing works on LiNGAM, the main contributions of our work are:
\begin{enumerate}
    \item Based on a specified error distribution, we define a novel likelihood ratio score which is used at each step in our sequential algorithm. The evaluation of the likelihood ratio only involves linear regression and residual calculation. There are no tuning parameters.
    \item We prove that at the population-level, this sequential algorithm will identify a true ordering of the underlying DAG under proper assumptions on the LiNGAM.
    \item Our sequential method is computationally tractable with computational complexity $O(p^2)$ for the number of updates used in the entire algorithm. If prior knowledge on the Markov blankets of the nodes is provided, the computational complexity can be further reduced to $O(pd)$, 
    where $d$ is the maximum size of the Markov blankets. 
    This is in sharp contrast to traditional score-based approaches for ordering search, which are NP hard in general \citep{Chickering1996,yeSimAnnealing}.
\end{enumerate}
}

The rest of the paper is organized as follows. In the rest of this section, we formally introduce the linear SEM of interest. Next, in Section \ref{methodology} we will introduce our approach: \S\ref{conditions} discusses the conditions for this approach to work; \S\ref{identifiability} provides a formal identifiability result; and \S\ref{FiniteSample} provides the finite sample version of the algorithm. Section \ref{simulations} presents simulation results for our procedure for small and large-sized Bayesian networks, along with an application to single-cell gene expression data. Finally, we conclude with a summary of our findings and discussion of future work. 





\subsection{{Review of LiNGAM}}
We follow closely here the definition of a LiNGAM given by \citet{JMLR:v7:shimizu06a}. 
\begin{definition}{(Linear Non-Gaussian Acyclic Model)}\label{defn:lingam}\\
For $p\geq 2$, let $\mathcal{G}$ be a DAG on $p$ nodes and $\mathbf{B}\in\mathbb{R}^{p\times p}$ 
be the weighted adjacency matrix of $\mathcal{G}$ such that $\mathbf{B}_{jk}\neq 0$ means $j\in PA_k$, the parent set of node $k$. Let $\epsilon=(\epsilon_1,\ldots,\epsilon_p)$ such that $\epsilon_k\sim g(\cdot;\theta_k)$ independently with $g(\cdot;\theta)$ a density of a non-Gaussian distribution parameterized by $\theta\in\mathbb{R}^q$. We say $X\in\mathbb{R}^p$ follows a LiNGAM with DAG $\mathcal{G}$ if
\begin{align}\label{eq:linSEM}
X_{k}=\sum_{j\in PA_k }\mathbf{B}_{jk}X_j +\epsilon_k, \quad\quad k=1,\ldots,p.
\end{align}
\end{definition}

The scalar form of the linear SEM in Equation \eqref{eq:linSEM} can be rewritten in vector form as $X= \mathbf{B}^TX+\epsilon.$ Put $\mathbf{M}=\left(\mathbb{I}_p-\mathbf{B}\right)^{-T}$, a matrix with ones on its diagonal. Let $AN_k$ denote the ancestor set of node $k$: $a\in AN_k$ means there exists a direct path starting at node $a$ and ending at node $k$, $a\to \dots \to k$. Then we arrive at $X= \mathbf{M}\epsilon$ in vector form and $X_k=\sum_{j\in AN_k\bigcup\{k\} }\mathbf{M}_{kj}\epsilon_j$ in scalar form for all $k\in[p]$. Noting that $\mathbf{M}$ serves as a mixing matrix for the independent components in $\eps$, we may think of the estimation of this linear SEM as an instance of Independent Component Analysis (ICA) \citep{HYVARINEN2000411}. \citet{JMLR:v7:shimizu06a} discuss the connection between LiNGAM and ICA.

\section{Methodology and algorithm\label{methodology}}

In this section, we introduce both the population-level and finite-sample versions of our sorting procedure. 
We also show that our choice of summary score $\mathcal{S}(k,\At;\X)$ 
in Algorithm~\ref{alg:continueOrdering} will lead to the identification of a topological ordering 
of the true DAG $\mathcal{G}$ used to define the linear SEM of Definition \ref{defn:lingam}. We start with a few main assumptions on the linear SEM we will work with.

\subsection{Assumptions\label{conditions}}

Our main assumptions are on the distributions of the independent errors $\epsilon$. We consider restricting our class of densities $\{g(\cdot;\theta_k)\}_{1\leq k\leq p}$ for the noise terms in Definition \ref{defn:lingam} to a scale-location family in which the $\theta_k>0$ are the scale parameters, such as the Laplace family of distributions, the Logistic family of distributions, or a Scaled-t distribution family (same degrees of freedom). This is summarized in Assumption \ref{scaleLocation}.

\begin{assumption}\label{scaleLocation}\ \\
Let $U\sim g(\cdot;\theta_0)$ with $\theta_0>0$ and $\E[U]=0$. For each $k=1,2,\dots,p$, the density of the error $\eps_k$ satisfies
\[
g(e;\theta_k)=\frac{\theta_0}{\theta_k}g(\theta_0 e/\theta_k;\theta_0).
\]
That is, $\eps_k\overset{d}{=}(\theta_k/\theta_0)U$, an equality in distribution. 
\end{assumption}

Our next assumption for the linear SEM of interest is on linear combinations of the noise terms. This condition is related to Lemma \ref{residsLinearCombo} in the appendix, a key result about how to characterize the regression residuals of Equation \eqref{leastSquaresResids} as linear combinations of ``independent components.''

\begin{assumption}\label{closureConvolution}
For any $j=1,2,\dots,p$ and any $a\in\mathbb{R}^p$ with at least two non-zero entries, 
the linear combination $a^T\epsilon$ does not follow the same distribution as
$\eps_j$.

\end{assumption}

Notable disagreements with Assumption \ref{closureConvolution} are when the $\epsilon_j$ are all Gaussian distributed (not the case for LiNGAM), or when the $\epsilon_j$ are all Poisson-distributed. Notable agreements with Assumption \ref{closureConvolution} (and Assumption \ref{scaleLocation}) are the cases where the $\epsilon_j$ are all Laplace-distributed, all Logistic-distributed, or all Scaled-t distributed (same degrees of freedom). This can be concluded with the characteristic function for a linear combination of two or more $\epsilon_j$'s.

To allow for a quicker sorting procedure in practice, we may make use of an \textit{a priori} known support set for the neighborhood of each node in the DAG.  We consider these neighborhood sets to arise based on domain knowledge, previous studies, or a pre-processing step such as with neighborhood lasso regression of \citet{neighborhoodLasso}. We highlight this usage in Assumption \ref{nbhdsAccurate}: 
\begin{assumption}\label{nbhdsAccurate}
For node $k$, denote its neighborhood estimate as $\MBkhat$. Assume for each $k=1,2,\dots,p$ that:
\[
\MBkhat \supseteq \MBk:= PA_k\cup CH_k\cup \bigcup_{j\in CH_k} PA_j\backslash\{k\},
\]
where $\MBk$ is known as the Markov Blanket of node $k$: the set of its parents $PA_k$, its children $CH_k$, and its co-parents $\bigcup_{j\in CH_k}PA_j\backslash\{k\}$.  \\
\end{assumption}
Let $\MBkthat:=\MBkhat\cap\At,$ which is the subset of the neighborhood set that has been ordered at step $t$ of our procedure (Algorithm~\ref{alg:continueOrdering}). For the cases where $|\MBkthat|\geq 1$, we will make use of least squares residuals for calculating the score $\mathcal{S}(k,\At)$. The corresponding sample version is discussed in \S\ref{FiniteSample}. At the population-level, the residual is
\begin{equation}\label{leastSquaresResids}
R_{kt} :=\begin{cases} X_k&\text{ if }|\MBkthat|=0\\ X_k- \beta^T_{kt} X_{\MBkthat}&\text{ otherwise}\end{cases},
\end{equation}
where $\beta_{kt}$ is the least-squares regression coefficient vector,
\[
\beta_{kt}=\left(\E\left[X_{\MBkthat}X_{\MBkthat}^T\right]\right)^{-1}\E\left[X_{\MBkthat}X_k\right].
\]

\begin{remark}
When we consider the population-level version of our algorithm 
in this section (i.e. we have infinite $n$), we can take $\MBkhat=[p]\backslash\{k\}$ for each $k$ so that Assumption~\ref{nbhdsAccurate} holds trivially. For the finite sample version of our procedure discussed in Section \ref{FiniteSample}, we will make use of Ordinary Least Squares (OLS) linear regressions which require the design matrix to be of full column rank. So if $p\ll n$, we may also take $\MBkhat=[p]\backslash\{k\}$ for each $k$. In the case that $p\gg n$ or $p\approx n$, the neighborhood sets can reduce the number of covariates in OLS regression if $|\MBkhat| \ll n$ for all $k$. 
\end{remark}

\subsection{Our Choice of a Likelihood Ratio Score\label{identifiability}} 
In Algorithm \ref{alg:continueOrdering}, we will select the next node to continue our constructed topological ordering as:
\begin{equation}\label{llr}
\hat{\pi}(t)=\arg\max_{k\not\in\At}\E_{ f_{ kt}(r_{kt}) }\left[\log\frac{ g(\Rkt;\eta_{kt}) }{ \phi(\Rkt;\sigma_{kt} ) }\right].
\end{equation}
Here, $\E_{ f_{kt}(\rkt)}[\cdot]$ denotes expectation with respect to $\Rkt$'s true density, $f_{kt}(\rkt)$. Also,
\[
\eta_{kt}:=\arg\max_\eta\E_{ f_{ kt}(r_{kt}) }\left[\log g(\Rkt;\eta)\right],
\]
while $\phi(r_{kt};\sigma_{kt} )$ is the density for $\mathcal{N}\left(0,\sigma_{kt}^2=\V[\Rkt]\right)$, {i.e. the normal density that matches the mean and variance of $\Rkt$.} Note that $f_{kt}(\rkt)$ is in general different from $g(\rkt;\eta_{kt})$.

The log-likelihood ratio in \eqref{llr} can be thought of as a score that tells us ``how non-Gaussian'' the residual $R_{kt}$ is. If the residual is explained by a Gaussian distribution well relative to the non-Gaussian distribution in the assumed family, then we expect the log-likelihood ratio to be smaller. Otherwise, if the Gaussian density is not a good fit relative to $g(\rkt;\eta_{kt})$, then we have stronger evidence to believe that node $k$ is a valid node to continue the ordering. In Theorem \ref{thm:validityOfLLR}, we claim that using \eqref{llr} leads to the identification of a valid topological ordering, our main result.

\begin{theorem}\label{thm:validityOfLLR} 
Let $X\in\mathbb{R}^p$ follow a LiNGAM with DAG $\mathcal{G}$. If Assumptions \ref{scaleLocation}, \ref{closureConvolution} and \ref{nbhdsAccurate} hold, then applying Algorithm \ref{alg:continueOrdering} at all steps $t=1,2,\dots,p$ with the score 
\[
\mathcal{S}(k,\At) = \E_{f_{ kt}(r_{kt}) }\left[\log\frac{ g(\Rkt;\eta_{kt}) }{ \phi(\rkt;\sigma_{kt}) }\right]
\]
will identify a permutation $\hat{\pi}=(\hat{\pi}(1),\ldots,\hat{\pi}(p))$ that is 
a topological ordering of $\mathcal{G}$. 
\end{theorem}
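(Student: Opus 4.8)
The plan is to show that at every step $t$ the greedy score $\mathcal{S}(k,\At)$ is strictly larger at a \emph{valid} next node — one all of whose ancestors already lie in $\At$ — than at any invalid node, so that Algorithm~\ref{alg:continueOrdering} can only ever append valid nodes; an induction on $t$ then delivers a full topological ordering. First I would rewrite the score as a difference of Kullback--Leibler divergences. Adding and subtracting $\E_{f_{kt}}[\log f_{kt}(\Rkt)]$ gives
\[
\mathcal{S}(k,\At) = \mathrm{KL}\big(f_{kt}\,\|\,\phi(\cdot;\sigma_{kt})\big) - \mathrm{KL}\big(f_{kt}\,\|\,g(\cdot;\eta_{kt})\big),
\]
where the first term is the negentropy $J(f_{kt}):=\mathrm{KL}(f_{kt}\|\phi(\cdot;\sigma_{kt}))$ of the residual (scale invariant, since $\sigma_{kt}^2=\V[\Rkt]$) and the second measures the failure of the assumed family to fit $\Rkt$. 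Because $\eta_{kt}$ maximizes $\E_{f_{kt}}[\log g(\Rkt;\eta)]$, Gibbs' inequality yields the universal bound $\mathcal{S}(k,\At)\le J(f_{kt})$, with equality if and only if $\Rkt$ is a scaled member of the family.

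Next I would characterize the residuals structurally under the inductive hypothesis that $\At$ is a valid prefix. Invoking the residual-representation result (Lemma~\ref{residsLinearCombo}) I write $\Rkt=a^{\top}\epsilon$. Since the neighborhood estimate contains all parents of $k$ (Assumption~\ref{nbhdsAccurate}), and since $k$ is an ancestor of no ordered node, $\epsilon_k$ is independent of $X_{\At}$ and appears in $\Rkt$ with coefficient exactly one. If $k$ is valid, the regression removes all parent — hence all ancestor — contributions, so $\Rkt=\epsilon_k\overset{d}{=}(\theta_k/\theta_0)U$, a single scaled copy of $U$; if $k$ is invalid, $\Rkt$ is a linear combination of at least two of the independent components $\epsilon_j$. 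By Assumption~\ref{scaleLocation} each $\epsilon_j$ is a scaled copy of a common $U$, so a valid residual already lies in the family and attains the Gibbs equality, giving the \emph{common} value $\mathcal{S}(k,\At)=J(f_U)=:c^{\star}>0$ at every valid node ($c^\star>0$ because $U$ is non-Gaussian).

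To control invalid nodes I would write the standardized invalid residual as a weighted sum $\sum_j w_j \bar U_j$ of i.i.d.\ standardized copies of $U$ with $\sum_j w_j^2=1$ and at least two nonzero weights. The entropy power inequality then gives $H(\sum_j w_j\bar U_j)\ge H(\bar U)$, i.e.\ $J(f_{kt})\le J(f_U)=c^{\star}$: mixing independent components cannot increase non-Gaussianity. Separately, Assumption~\ref{closureConvolution} guarantees that such a genuine combination is not distributed as any $\epsilon_j$, hence not as a scaled $U$, so $\mathrm{KL}(f_{kt}\|g(\cdot;\eta_{kt}))>0$ strictly. Combining, $\mathcal{S}(k,\At)= J(f_{kt}) - \mathrm{KL}(f_{kt}\|g(\cdot;\eta_{kt})) < c^{\star}$ for every invalid $k$. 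Since the DAG restricted to the unordered nodes always has a source, a valid node exists at each step, the strict separation forces $\arg\max$ onto a valid node, and appending it preserves the valid-prefix hypothesis; inducting from $\mathcal{A}_1=\emptyset$ through $t=p$ produces a permutation $\hat{\pi}$ respecting every parent--child precedence.

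I expect two crux points. The conceptual obstacle is the negentropy comparison for invalid nodes: rigorously establishing that combining independent non-Gaussian components is ``more Gaussian.'' This is essential because the Gibbs bound alone permits a large $J(f_{kt})$, and it is exactly the entropy power inequality (applicable thanks to the common-$U$ structure of Assumption~\ref{scaleLocation}) that supplies $J(f_{kt})\le c^{\star}$. The technical obstacle is the residual characterization itself — the cited lemma combined with Markov-blanket and valid-prefix bookkeeping — in particular ensuring an invalid residual genuinely retains at least two components and does not collapse by accidental cancellation; Assumption~\ref{closureConvolution} is what converts ``at least two components'' into the strict family-fit gap that yields the clean separation $\mathcal{S}_{\text{invalid}}<c^{\star}=\mathcal{S}_{\text{valid}}$.
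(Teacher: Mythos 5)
Your proposal is correct and follows essentially the same route as the paper's proof: the identical KL-divergence decomposition of the score, the residual characterization of Lemma~\ref{residsLinearCombo}, the entropy power inequality to show invalid residuals are no farther from Gaussian than the common negentropy of the scale family (the paper's Lemmas~\ref{dist2gauss} and~\ref{lemCLTesque}), Assumption~\ref{closureConvolution} for the strict gap, and induction on the valid prefix. Your standardized-weights formulation of the EPI step is a slightly more compact packaging of the paper's argument via entropy-matched Gaussians, but it is the same idea.
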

Theorem \ref{thm:validityOfLLR} suggests that the maximization at each iteration in which we apply Algorithm \ref{alg:continueOrdering} can be done easily. This differs from maximizing a score over a whole ordering which may also lead to identification of the true MEC, but is in general NP hard (not tractable). Relatedly, Appendix \ref{append:scoreBasedGreedy} gives additional motivation for the choice of $\mathcal{S}(k,\At)$ in Theorem \ref{thm:validityOfLLR} as one that allows us greedily optimize the mean log-likelihood when the full ordering is only partially discovered. The proof of Theorem \ref{thm:validityOfLLR} in Appendix \ref{formalProofOfValidity} is an inductive application of key lemmas found in Appendix \ref{identifiabilityAppendix}.

\subsection{Finite Sample Sorting Procedure\label{FiniteSample}}

Assume that we have a data matrix $\X\in\mathbb{R}^{n\times p}$ such that $\X_{i\cdot}$, the $i$-th row, is iid across $i=1,2,\dots,n$ from a distribution defined by a LiNGAM satisfying Assumptions \ref{scaleLocation} and \ref{closureConvolution}. Also let Assumption \ref{nbhdsAccurate} hold, where the sets $\MBkhat$ are given by domain knowledge, or they are estimated with data independent of $\X$ by an asymptotically consistent procedure. 

Denote by $\X_{\cdot S}$ the columns of $\X$ indexed by the set $S$. When $S$ is a singleton, such as $S=\{k\}$, we will simply write $\X_{\cdot k}$ for the $k$-th column. Analogous to Section \ref{identifiability}, consider:
\[
\hat{\beta}_{kt} = \left(\X_{\cdot\MBkthat}^T\X_{\cdot\MBkthat}\right)^{-1}\X_{\cdot\MBkthat}^T\X_{\cdot k}\in\mathbb{R}^{|\MBkthat|\times 1},
\]
which exists so long as $1\leq |\MBkthat|\leq n$ and $\X_{\cdot\MBkthat}$ is of full column rank almost surely. Further, we define $\hat{R}_{kt} \in\mathbb{R}^{n\times 1}$ as 
\[
\hat{R}_{kt}=\begin{cases}\X_{\cdot k}&\text{ if }|\MBkthat|=0\\ \X_{\cdot k}-\X_{\cdot \MBkthat}\hat{\beta}_{kt}&\text{ if }|\MBkthat|\geq 1\end{cases},
\]
the vector of residuals which we will use to estimate the pertinent scale parameter of \eqref{llr}, denoted as $\hat{\eta}_{kt}$ and $\hat{\sigma}_{kt}$, respectively. Explicitly, we select the next node to continue an ordering using the empirical analogue of the mean log-likelihood ratio in Equation \eqref{llr}:
\begin{equation}\label{samplellr}
\hat{\pi}(t)=\arg\max_{k\not\in\At}\frac{1}{n}\sum_{i=1}^n\log\frac{g(\hat{R}_{i,kt};\hat{\eta}_{kt})}{\phi(\hat{R}_{i,kt};\hat{\sigma}_{kt})},
\end{equation}
where $\hat{R}_{i,kt}$ is the $i$-th entry of the vector $\hat{R}_{kt}$, while $\hat{\sigma}^2_{kt}:=\frac{1}{n}\|{\hat{R}_{kt}}\|_2^2$ and $\hat{\eta}_{kt}:=\arg\max_{\eta}\sum_{i=1}^n\log g(\hat{R}_{i,kt};\eta)$. For example, if $\eta_{kt}$ is the scale parameter for a Laplace distribution, it can be seen that $\hat{\eta}_{kt}=\frac{1}{n}\|{\hat{R}_{kt}}\|_1$. In this case, \eqref{samplellr} is equivalent to 
\begin{equation}\label{eq:lapscore}
\hat{\pi}(t)=\arg\max_{k\not\in\At}\log\frac{\hat{\sigma}_{kt}}{\hat{\eta}_{kt}}=\arg\max_{k\not\in\At}\frac{\|{\hat{R}_{kt}}\|_2}{\|{\hat{R}_{kt}}\|_1}.
\end{equation}

{The Laplace update \eqref{eq:lapscore} exemplifies how simple the maximization of our likelihood ratio score is. After the regression of each unsorted node $X_k$, $k\notin\At$, onto $\MBkthat$, we only need to compare the ratio between the two norms of the residual vector $\hat{R}_{kt}$ across unsorted nodes to find $\hat{\pi}(t)$.} Algorithm \ref{algoInPractice} in the Supplementary Material shows the pseudo-code for the sorting procedure we use in practice, with a strategic update of regression residuals using partial regression {that greatly reduces the computation cost.} {We have also provided the details on the estimation of the scale parameters for Logistic and Scaled-t distributions in Appendix \ref{scaleParameterEstimation}.}


\begin{figure*}[ht]
\vskip 0.2in
\begin{center}
\centerline{\includegraphics[width=\textwidth]{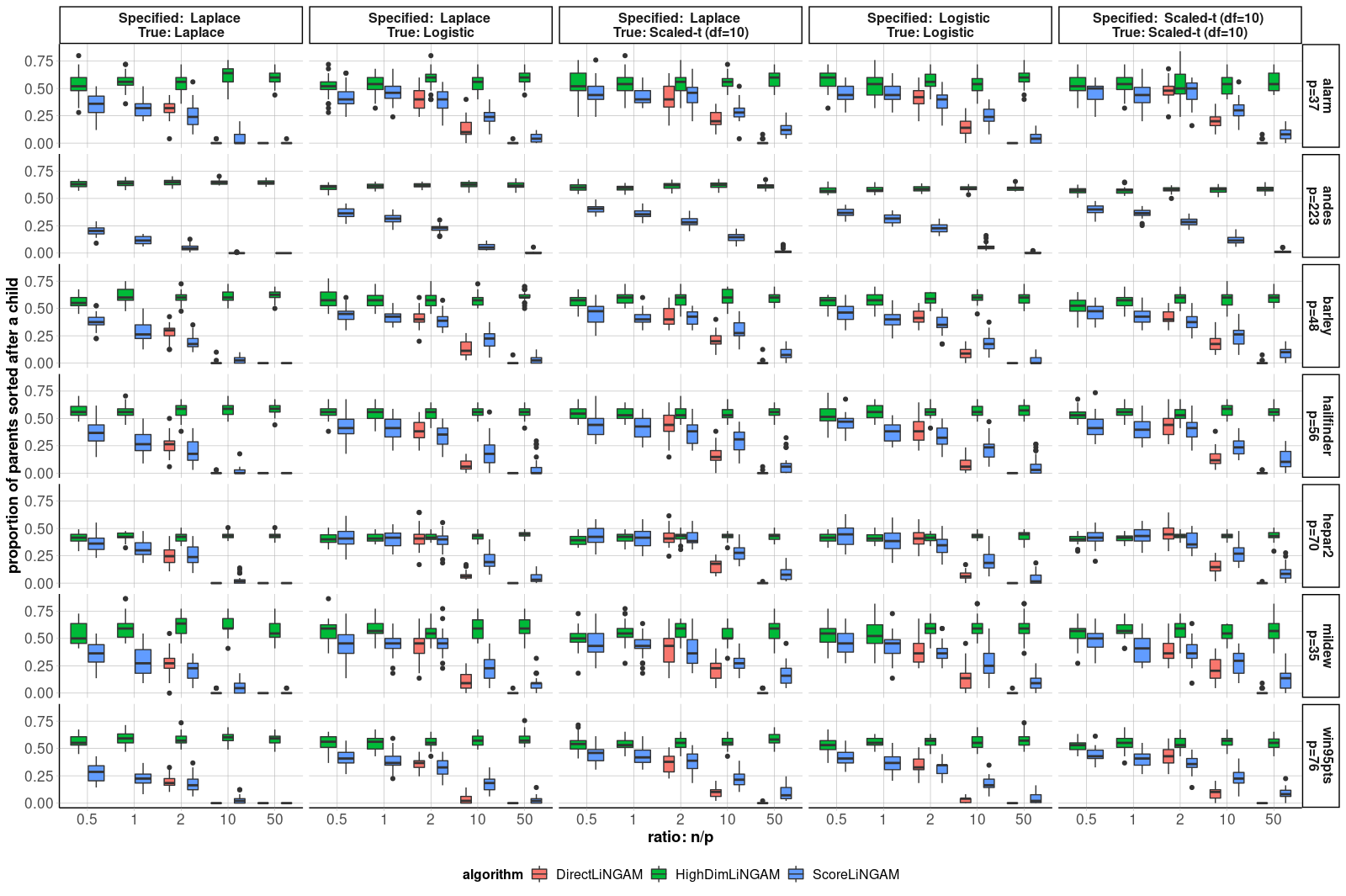}}
\caption{The simulation results comparing LiNGAM estimation procedures.}
\label{fig:sortError}
\end{center}
\vskip -0.2in
\end{figure*}

\section{Empirical Results\label{simulations}}
\subsection{Simulations on Small Networks}
We now present simulation results for networks that are on the smaller end ($35\leq p\leq 223$), {downloaded from the \url{bnlearn.com} Bayesian network repository}. We compared our sorting procedure to other LiNGAM learning procedures. Due to their readily available code, the algorithms of interest are ``DirectLiNGAM'' \citep{shimizu11}, ``HighDimLingam'' \citep{wangHighDLingam}, and ``ScoreLiNGAM'' (our procedure). For each simulation setting, we conduct 30 replicates.

For each choice of $\mathcal{G}$ underlying a LiNGAM, our synthetic data generation schema was as follows. We generated $\mathbf{B}_{jk}\iid\text{Uniform}[-0.9,-0.4]\cup[0.4,0.9]$ for each $(j,k)$ such that $j\in PA_k$, and otherwise set $\mathbf{B}_{jk}=0$. We generated $\theta_k\iid\text{Uniform}[0.4,0.7]$ across $1\leq k\leq p$, {where $\theta_k$ is the scale parameter for the error distributions as in Assumption~\ref{scaleLocation}.} Finally, we varied sample size as $n=0.5p,p,2p,10p,50p$. {Note that $n=0.5p$ and $n=p$ represent the high-dimensional setting ($p\geq n$).} Next, a data set $\X\in\mathbb{R}^{n\times p}$ of iid samples is drawn from the distribution given by the LiNGAM parameterized by $(\mathbf{B},\theta_1,\dots,\theta_p)$ and having errors $\eps_k\sim g(\cdot;\theta_k)$ across $1\leq k\leq p$. Moreover, we varied the family of the densities $g$ in Assumption \ref{scaleLocation} to be the Laplace, the Logistic, or the Scaled-t distribution (10 degrees of freedom) scale-location families.  Finally, ScoreLiNGAM and HighDimLiNGAM were run with knowledge of the true Markov blanket for each node, while DirectLiNGAM was not as it does not have this option. Afterward, the data matrix $\mathbf{X}$ was standardized so that each column has sample standard deviation equal to $1$ and sample mean equal to $0$.

{Figure~\ref{fig:sortError} reports the results in terms of order estimation error (lower is better), which we define as:
\[
\frac{1}{p^2}\sum_{j=1}^p\sum_{ k=1 }^p\mathbf{1}\{ \B_{jk}\neq0 ,\hat{\pi}^{-1}(k)<\hat{\pi}^{-1}(j) \}.
\]
Our ScoreLiNGAM achieved the highest accuracy for all high-dimensional settings ($n\leq p)$. DirectLiNGAM became quite comparable until the sample size increased to $n=2p$ and did a bit better than ScoreLiNGAM when $n\geq 10p$ (large sample size cases).}
Note that results are not presented for DirectLiNGAM when $n=0.5p$ nor $n=p$, because it is not applicable for $n\leq p$. For the Andes network, results for DirectLiNGAM are also not presented as this procedure takes about 118 minutes for a single replicate, which adds up across 90 total replicates. On the other hand, HighDimLiNGAM is generally the least accurate algorithm across all networks and sample sizes. Recall that the data matrix $\mathbf{X}$ is re-scaled. The inaccuracy of HighDimLiNGAM is likely owed to the fact that this procedure is not invariant to a re-scaling of the data, as ScoreLiNGAM and DirectLiNGAM are. {We also compared the three methods when the error distributions were mis-specified for ScoreLiNGAM (second and third columns of Figure~\ref{fig:sortError}). The true error distributions were Logistic or Scaled-t, but we still used the Laplace update \eqref{eq:lapscore} in ScoreLiNGAM. It is seen that its accuracy was comparable to the result when we correctly specified the error distributions (the other three columns), suggesting that our method is robust to model mis-specification.}

In terms of speed, Figure \ref{fig:sortTmWin95Pts} summarizes this for the win95pts network. The advantage of our method is speed, with our method being no less than 100 times faster the next fastest method. (Note: HighDimLiNGAM's procedure is parallelized across 7 threads.) Appendix \ref{moreFigs} contains details about the implementation of each procedure, along with the machine used to run these experiments. Moreover, Figure \ref{fig:sortTm} in Appendix \ref{moreFigs} contains sorting times for all the settings we considered. 


\begin{figure}
\centering
\begin{minipage}{.33\textwidth}
  \centering
  \includegraphics[width=\textwidth]{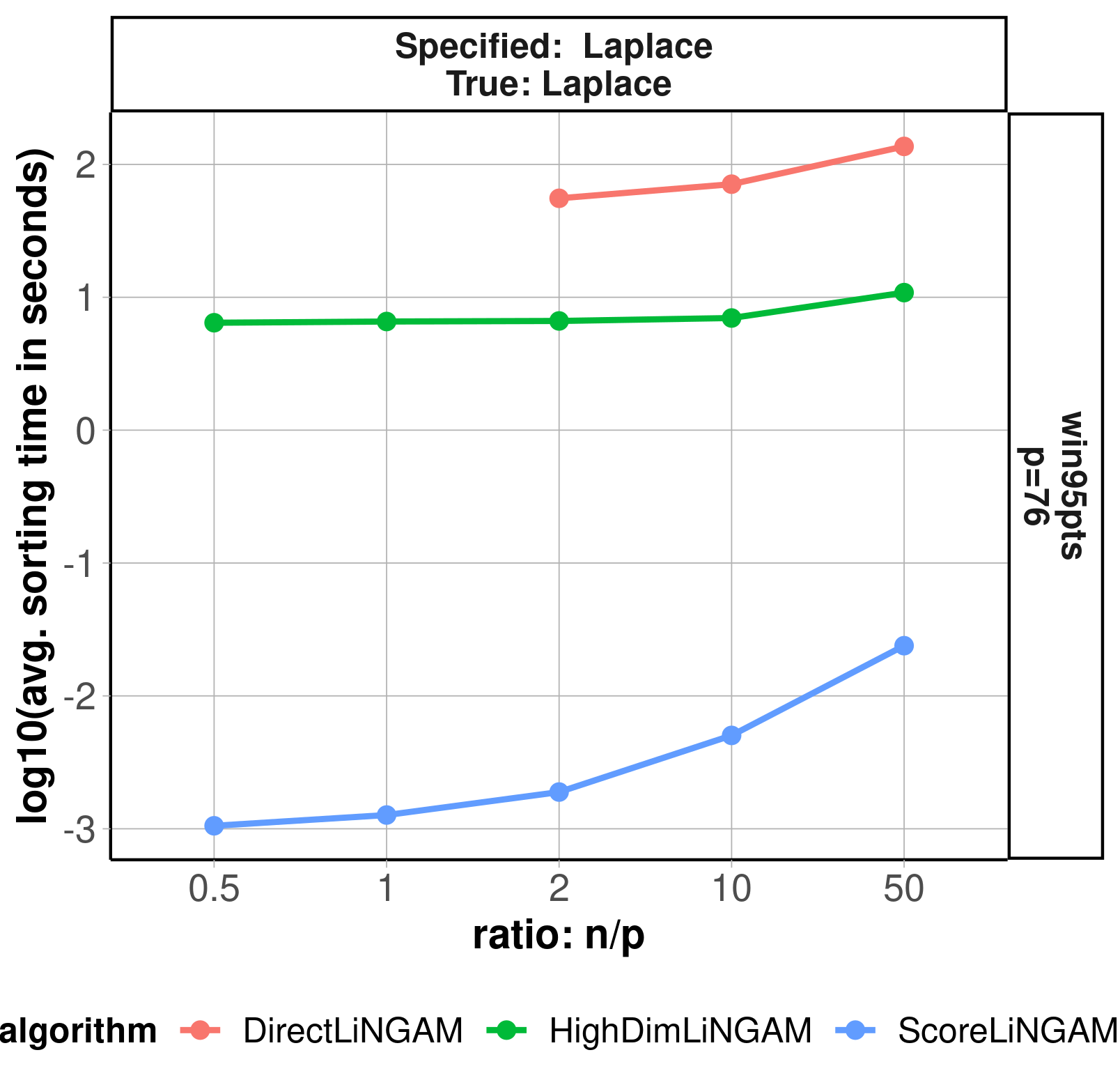}
  \captionof{figure}{The $\log_{10}$(avg. sorting time in seconds) scale for the various methods applied to the win95pts network.}
\label{fig:sortTmWin95Pts}
\end{minipage}%
\hfill
\begin{minipage}{.31\textwidth}
  \centering
  \includegraphics[width=\textwidth]{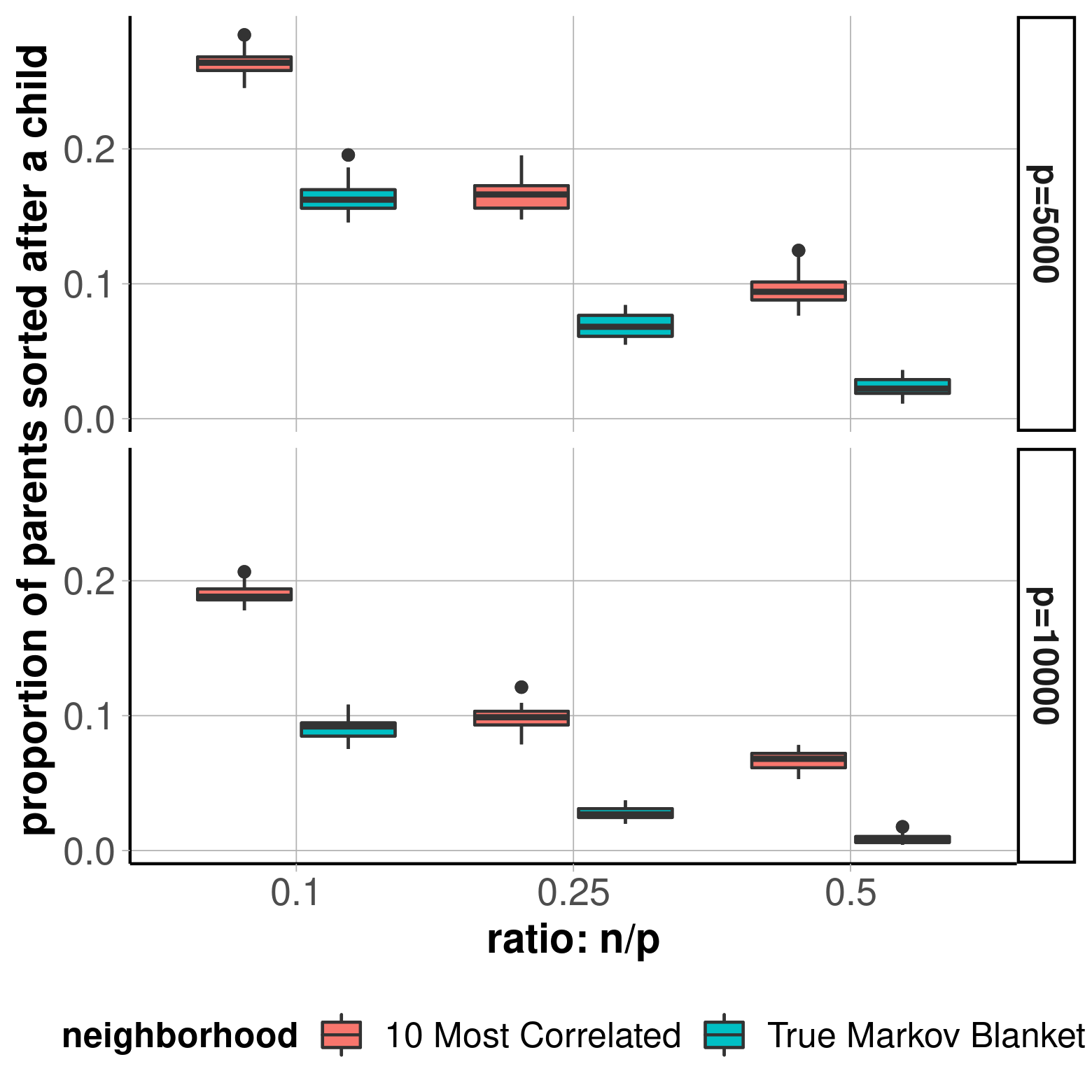}
  \captionof{figure}{Sorting errors for ScoreLiNGAM under $p=5000,10000$ and $n=0.1p,0.25p,0.5p$. Color indicates how the neighborhood sets are constructed.}
\label{fig:highDimResults}
\end{minipage}
\hfill
\begin{minipage}{0.33\textwidth}
\centering
\includegraphics[width=\textwidth]{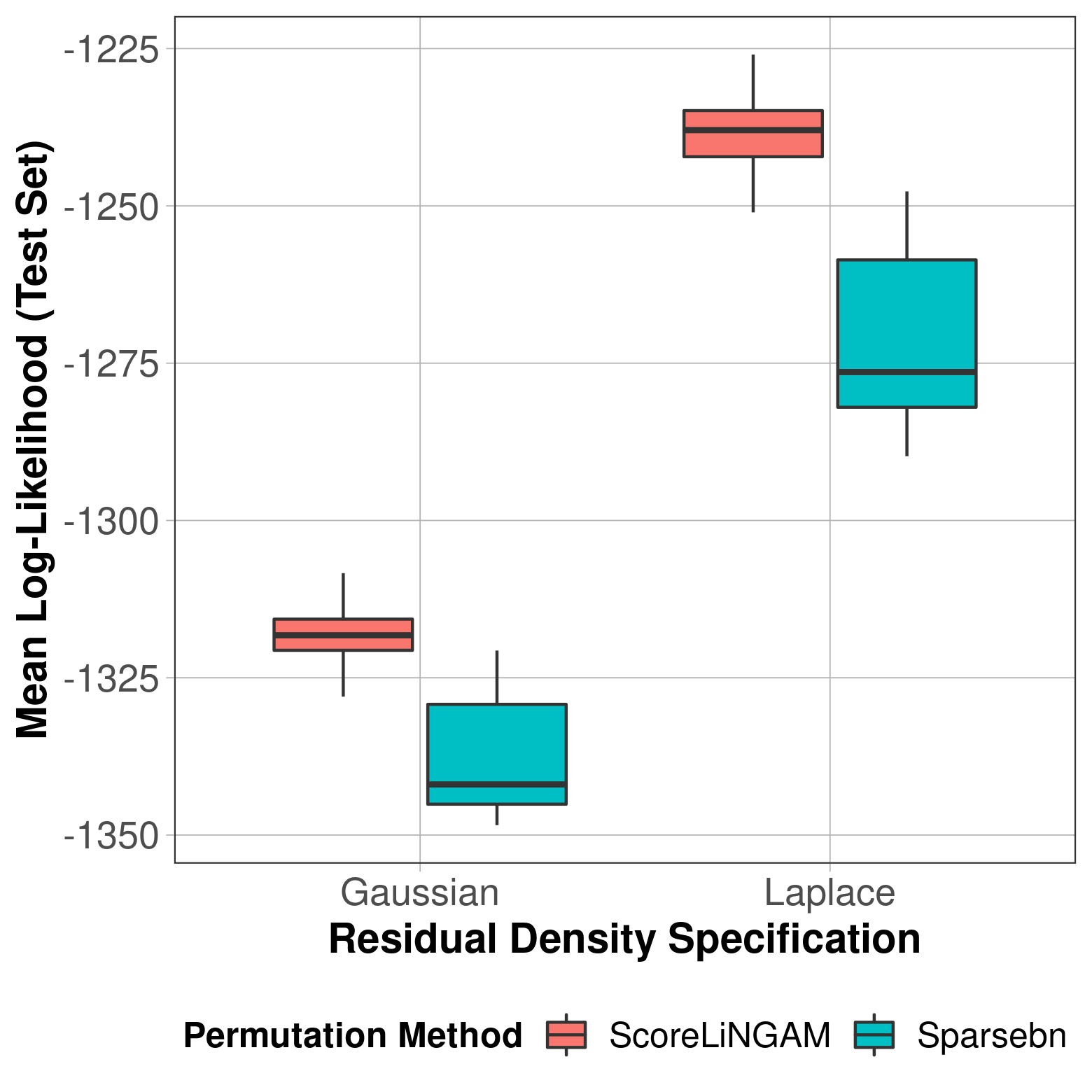}
\caption{The mean log-likelihood on 1,000 genes for a subset of cells in the data of \citep{yao2021scrna-seq}, across 50 repetitions.}
\label{fig:only1000GenesCE}
\end{minipage}
\end{figure}

\subsection{Larger Network Results}\label{sec:largesim}

Next, we simulated large networks with $p=5000,10000$ and $n=0.1p,0.25p,0.5p$ to further demonstrate the scalability of ScoreLiGAM. We do not include results in these settings for DirectLiNGAM nor HighDimLiNGAM as they would take too long to run. The network generation is such that 5\% of nodes are root nodes (no parents), and all other nodes have between 1 and 2 parents (with equi-probability) which are selected at random from the set of predecessors in a randomly generated permutation. Moreover, $\B_{jk}\iid\text{Uniform}[-0.9,-0.4]\cup[0.4,0.9]$ across $(j,k)$ such that $j\in PA_k$, while $\theta_k\iid\text{Uniform}[0.25,0.9]$ across $1\leq k\leq p$ is the scale parameter for the Laplace noise in the synthetic LiNGAM. A new LiNGAM is generated according to this schema for each data replicate. 

Figure \ref{fig:highDimResults} presents simulation results for ScoreLiNGAM 
with two different a priori known neighborhood sets. ``True Markov Blanket'' means that we set $\MBkthat=\MBk$ for each $1\leq k\leq p$ and run the sorting procedure with these oracle sets. The results for ``10 Most Correlated'' use 20\% of the data to specify $\MBkthat$ as the 10 most Pearson-correlated variables (in absolute value) to $X_k$ for each $1\leq k\leq p$, and the other 80\% of the data to estimate the topological ordering. 

It is encouraging to see in Figure~\ref{fig:highDimResults} that the accuracy of our method is high even for such a challenging high-dimensional setting. In fact, the average error rate is quite comparable to that for the smaller networks reported in Figure~\ref{fig:sortError}. 
As expected, an accurate neighborhood set provides better sorting results. Further, our method can run relatively quickly for large $p$, but its accuracy naturally is dictated by sample size. Figure \ref{fig:highDimResultsTm} in Appendix \ref{moreFigs} contains the sorting times to go along with Figure \ref{fig:highDimResults}. 


\subsection{Application: Single-Cell Gene Expression Data\label{applicationSingleCell}}
We apply our method on the data of \citet{yao2021scrna-seq}\footnote{Available at \url{http://cells.ucsc.edu/?ds=allen-celltypes+mouse-cortex&meta=regionlabel} in compressed TSV format}. With it, we seek to estimate a linear SEM to model a gene regulatory network, where each {$X_k$ in Equation \eqref{eq:linSEM}} is the expression level of a gene. We focus our attention on their dataset for which isolated single cells were processed for RNA sequencing using SMART-Seq v4 (labeled ``Mouse Cortex+Hipppocampus (2019/2020)''). Noting the paper's finding that cells' gene expressions cluster according to region and cell type, we subset the data as follows. We focus on glutamatergic cells from the mice brains' primary visual cortex. We also focus on cells for which injection materials are not specified (see \citet{neuronalTracing2019} for background on neuronal tracers). This takes us from 74,973 cells down to 7,159--the largest subset of all cell class, isocortex location, and injection material combinations. A sizable amount of genes had expression measurements of exactly $0$, so we subset genes to those which were measured to be non-zero in 50\% or more of these cells. This brings us from 45,768 to 10,012 genes. 

{As for large simulated networks in Section~\ref{sec:largesim}, DirectLiNGAM and HighDimLiNGAM were too slow for this application.} In order to compare ScoreLiNGAM to another linear structural equation modeling procedure, we applied the package \texttt{sparsebn} \citep{sparsebnJSSv091i11} to our data, which is a score-based method that maximizes a regularized Gaussian likelihood over the DAG space \citep{aragamZhou15a}. To make comparisons across 50 repetitions, we randomly select 1,000 of the original 10,012 genes. For each repetition, we randomly sample 2,000 cells: half of the cells are designated to be in the training set, and the other half in test set; each data matrix is standardized such that columns have sample standard deviation $1$ and sample mean $0$. 

In the training set, 20\% of cells are randomly selected to estimate the Pearson correlation matrix. We specify the neighborhoods, $\MBkhat$, for ScoreLiNGAM as the 50 genes $j\in[1000]\backslash\{k\}$ with the largest Pearson correlation (in absolute value) with gene $k$. The remaining 80\% of training data is used to estimate a topological ordering and the linear SEM's coefficients (via ordinary least squares). For Sparsebn, no a priori neighborhood selection is used: parent sets for the linear SEM are learned with 100\% of the training data using default options in the \texttt{estimate.dag} command, and the selection of the final DAG in the solution path is done by the recommended \texttt{select.parameter} command. For Sparsebn, the linear SEM's model parameters are estimated according to the selected DAG. Moreover, the noise densities we fit to the residuals in the training set are either Gaussian or Laplace. 

As can be seen in Figure \ref{fig:only1000GenesCE}, the Laplace density specification for the additive errors provides a significantly higher mean log-likelihood on the test set compared to a Gaussian density for both methods. This shows that the Laplace distribution, with its thicker tails than the Gaussian distribution, fits this data better. 
Furthermore, ScoreLiNGAM showed substantially higher test-data likelihood than Sparsebn under both error distributions for calculating the likelihood.

\section{Discussion}

In this paper, we demonstrated that sequentially applying Algorithm \ref{alg:continueOrdering} can give promising structure learning results. We demonstrated this with a novel sequential procedure based on parametric specification that provides an alternative to the state of the art for the identifiability and estimation of a linear DAG model with non-Gaussian errors. 
We discussed the conditions, Assumptions \ref{scaleLocation} and \ref{closureConvolution}, under which the proposed causal discovery procedure will identify the valid DAG. We also proposed a relatively simple procedure that can make strategic use of an a priori known neighborhood set for each node. Finally, we presented numerical evidence that our procedure scales to large dimensions, which is otherwise not the case for the state-of-the-art for LiNGAM. We accompanied these simulations with a real-data application. Further extensions of the work presented here include formal statistical guarantees along with extensions of the likelihood ratio approach to nonlinear SEMs. 



%

\newpage

\bibliographystyle{unsrtnat}
\bibliography{references}

\begin{thebibliography}{35}
\providecommand{\natexlab}[1]{#1}
\providecommand{\url}[1]{\texttt{#1}}
\expandafter\ifx\csname urlstyle\endcsname\relax
  \providecommand{\doi}[1]{doi: #1}\else
  \providecommand{\doi}{doi: \begingroup \urlstyle{rm}\Url}\fi

\bibitem[Pearl(2009)]{pearlCausality}
Judea Pearl.
\newblock \emph{Causality: Models, Reasoning and Inference}.
\newblock Cambridge University Press, USA, 2nd edition, 2009.
\newblock ISBN 052189560X.

\bibitem[Malinsky et~al.(2019)Malinsky, Shpitser, and
  Richardson]{pmlr-v89-malinsky19b_poCalc}
Daniel Malinsky, Ilya Shpitser, and Thomas Richardson.
\newblock A potential outcomes calculus for identifying conditional
  path-specific effects.
\newblock In Kamalika Chaudhuri and Masashi Sugiyama, editors,
  \emph{Proceedings of the Twenty-Second International Conference on Artificial
  Intelligence and Statistics}, volume~89 of \emph{Proceedings of Machine
  Learning Research}, pages 3080--3088. PMLR, 16--18 Apr 2019.
\newblock URL \url{https://proceedings.mlr.press/v89/malinsky19b.html}.

\bibitem[Peters et~al.(2017)Peters, Janzing, and Sch\"olkopf]{Peters2017}
J.~Peters, D.~Janzing, and B.~Sch\"olkopf.
\newblock \emph{Elements of Causal Inference: Foundations and Learning
  Algorithms}.
\newblock MIT Press, Cambridge, MA, USA, 2017.

\bibitem[Shimizu et~al.(2006)Shimizu, Hoyer, Hyv\"{a}rinen, and
  Kerminen]{JMLR:v7:shimizu06a}
Shohei Shimizu, Patrik~O. Hoyer, Aapo Hyv\"{a}rinen, and Antti Kerminen.
\newblock A linear non-gaussian acyclic model for causal discovery.
\newblock \emph{Journal of Machine Learning Research}, 7\penalty0
  (72):\penalty0 2003--2030, 2006.
\newblock URL \url{http://jmlr.org/papers/v7/shimizu06a.html}.

\bibitem[Spirtes and Glymour(1991)]{pcAlgo1991}
Peter Spirtes and Clark Glymour.
\newblock An algorithm for fast recovery of sparse causal graphs.
\newblock \emph{Social Science Computer Review}, 9\penalty0 (1):\penalty0
  62--72, 1991.
\newblock \doi{10.1177/089443939100900106}.
\newblock URL \url{https://doi.org/10.1177/089443939100900106}.

\bibitem[Chickering(2002)]{chickering2002}
David Chickering.
\newblock Optimal structure identification with greedy search.
\newblock \emph{Journal of Machine Learning Research}, 3:\penalty0 507--554, 01
  2002.
\newblock \doi{10.1162/153244303321897717}.

\bibitem[B\"uhlmann et~al.(2014)B\"uhlmann, Peters, and Ernest]{buhlmann2014}
Peter B\"uhlmann, Jonas Peters, and Jan Ernest.
\newblock Cam: Causal additive models, high-dimensional order search and
  penalized regression.
\newblock \emph{Ann. Statist.}, 42\penalty0 (6):\penalty0 2526--2556, 12 2014.
\newblock \doi{10.1214/14-AOS1260}.
\newblock URL \url{https://doi.org/10.1214/14-AOS1260}.

\bibitem[Spirtes et~al.(2000)Spirtes, Glymour, Scheines, Kauffman, Aimale, and
  Wimberly]{spirtes2000constructing}
Pater Spirtes, Clark Glymour, Richard Scheines, Stuart Kauffman, Valerio
  Aimale, and Frank Wimberly.
\newblock Constructing bayesian network models of gene expression networks from
  microarray data, 2000.

\bibitem[Glymour et~al.(2019)Glymour, Zhang, and Spirtes]{reviewShort}
Clark Glymour, Kun Zhang, and Peter Spirtes.
\newblock Review of causal discovery methods based on graphical models.
\newblock \emph{Frontiers in Genetics}, 10:\penalty0 524, 2019.
\newblock ISSN 1664-8021.
\newblock \doi{10.3389/fgene.2019.00524}.
\newblock URL
  \url{https://www.frontiersin.org/article/10.3389/fgene.2019.00524}.

\bibitem[Runge et~al.(2019)Runge, Bathiany, Bollt, Camps-Valls, Coumou, Deyle,
  Glymour, Kretschmer, Mahecha, Mu{\~{n}}oz-Mar{\'{\i}}, van Nes, Peters, Quax,
  Reichstein, Scheffer, Sch\"{o}lkopf, Spirtes, Sugihara, Sun, Zhang, and
  Zscheischler]{rungeClimate}
Jakob Runge, Sebastian Bathiany, Erik Bollt, Gustau Camps-Valls, Dim Coumou,
  Ethan Deyle, Clark Glymour, Marlene Kretschmer, Miguel~D. Mahecha, Jordi
  Mu{\~{n}}oz-Mar{\'{\i}}, Egbert~H. van Nes, Jonas Peters, Rick Quax, Markus
  Reichstein, Marten Scheffer, Bernhard Sch\"{o}lkopf, Peter Spirtes, George
  Sugihara, Jie Sun, Kun Zhang, and Jakob Zscheischler.
\newblock Inferring causation from time series in earth system sciences.
\newblock \emph{Nature Communications}, 10\penalty0 (1), June 2019.
\newblock \doi{10.1038/s41467-019-10105-3}.
\newblock URL \url{https://doi.org/10.1038/s41467-019-10105-3}.

\bibitem[Sani et~al.(2020)Sani, Malinsky, and Shpitser]{sani_blackbox}
Numair Sani, Daniel Malinsky, and Ilya Shpitser.
\newblock Explaining the behavior of black-box prediction algorithms with
  causal learning.
\newblock \emph{CoRR}, abs/2006.02482, 2020.
\newblock URL \url{https://arxiv.org/abs/2006.02482}.

\bibitem[Zheng et~al.(2018)Zheng, Aragam, Ravikumar, and
  Xing]{ZhengNoTears2018_NEURIPS_e347c514}
Xun Zheng, Bryon Aragam, Pradeep~K Ravikumar, and Eric~P Xing.
\newblock Dags with no tears: Continuous optimization for structure learning.
\newblock In S.~Bengio, H.~Wallach, H.~Larochelle, K.~Grauman, N.~Cesa-Bianchi,
  and R.~Garnett, editors, \emph{Advances in Neural Information Processing
  Systems}, volume~31. Curran Associates, Inc., 2018.
\newblock URL
  \url{https://proceedings.neurips.cc/paper/2018/file/e347c51419ffb23ca3fd5050202f9c3d-Paper.pdf}.

\bibitem[Zheng et~al.(2020)Zheng, Dan, Aragam, Ravikumar, and
  Xing]{zheng20aSparseNonParamDAGs}
Xun Zheng, Chen Dan, Bryon Aragam, Pradeep Ravikumar, and Eric Xing.
\newblock Learning sparse nonparametric dags.
\newblock In Silvia Chiappa and Roberto Calandra, editors, \emph{Proceedings of
  the Twenty Third International Conference on Artificial Intelligence and
  Statistics}, volume 108 of \emph{Proceedings of Machine Learning Research},
  pages 3414--3425. PMLR, 26--28 Aug 2020.
\newblock URL \url{https://proceedings.mlr.press/v108/zheng20a.html}.

\bibitem[Raskutti and Uhler(2018)]{raskuttiUhler2018SparsePerm}
Garvesh Raskutti and Caroline Uhler.
\newblock Learning directed acyclic graph models based on sparsest
  permutations.
\newblock \emph{Stat}, 7\penalty0 (1):\penalty0 e183, 2018.
\newblock \doi{https://doi.org/10.1002/sta4.183}.
\newblock URL \url{https://onlinelibrary.wiley.com/doi/abs/10.1002/sta4.183}.
\newblock e183 sta4.183.

\bibitem[Solus et~al.(2021)Solus, Wang, and Uhler]{solus2021consistency}
L~Solus, Y~Wang, and C~Uhler.
\newblock {Consistency guarantees for greedy permutation-based causal inference
  algorithms}.
\newblock \emph{Biometrika}, 108\penalty0 (4):\penalty0 795--814, 01 2021.
\newblock ISSN 0006-3444.
\newblock \doi{10.1093/biomet/asaa104}.
\newblock URL \url{https://doi.org/10.1093/biomet/asaa104}.

\bibitem[Chickering(1996)]{Chickering1996}
David~Maxwell Chickering.
\newblock \emph{Learning Bayesian Networks is NP-Complete}, pages 121--130.
\newblock Springer New York, New York, NY, 1996.
\newblock ISBN 978-1-4612-2404-4.
\newblock \doi{10.1007/978-1-4612-2404-4_12}.
\newblock URL \url{https://doi.org/10.1007/978-1-4612-2404-4_12}.

\bibitem[Ye et~al.(2021)Ye, Amini, and Zhou]{yeSimAnnealing}
Qiaoling Ye, Arash~A. Amini, and Qing Zhou.
\newblock Optimizing regularized cholesky score for order-based learning of
  bayesian networks.
\newblock \emph{IEEE Transactions on Pattern Analysis and Machine
  Intelligence}, 43\penalty0 (10):\penalty0 3555--3572, 2021.
\newblock \doi{10.1109/TPAMI.2020.2990820}.

\bibitem[Wang et~al.(2017)Wang, Solus, Yang, and Uhler]{wangSolusYangUhler2017}
Yuhao Wang, Liam Solus, Karren Yang, and Caroline Uhler.
\newblock Permutation-based causal inference algorithms with interventions.
\newblock In I.~Guyon, U.~Von Luxburg, S.~Bengio, H.~Wallach, R.~Fergus,
  S.~Vishwanathan, and R.~Garnett, editors, \emph{Advances in Neural
  Information Processing Systems}, volume~30. Curran Associates, Inc., 2017.
\newblock URL
  \url{https://proceedings.neurips.cc/paper/2017/file/275d7fb2fd45098ad5c3ece2ed4a2824-Paper.pdf}.

\bibitem[Peters et~al.(2014)Peters, Mooij, Janzing, and
  Sch{{\"o}}lkopf]{JMLR:v15:peters14a}
Jonas Peters, Joris~M. Mooij, Dominik Janzing, and Bernhard Sch{{\"o}}lkopf.
\newblock Causal discovery with continuous additive noise models.
\newblock \emph{Journal of Machine Learning Research}, 15\penalty0
  (58):\penalty0 2009--2053, 2014.
\newblock URL \url{http://jmlr.org/papers/v15/peters14a.html}.

\bibitem[Ghoshal and Honorio(2017)]{ghoshalPolynomialTime}
Asish Ghoshal and Jean Honorio.
\newblock Learning identifiable gaussian bayesian networks in polynomial time
  and sample complexity.
\newblock In \emph{Proceedings of the 31st International Conference on Neural
  Information Processing Systems}, NIPS'17, page 6460–6469, Red Hook, NY,
  USA, 2017. Curran Associates Inc.
\newblock ISBN 9781510860964.

\bibitem[Park and Kim(2020)]{park_kim_2020}
Gunwoong Park and Youngwhan Kim.
\newblock Identifiability of gaussian linear structural equation models with
  homogeneous and heterogeneous error variances.
\newblock \emph{Journal of the Korean Statistical Society}, Jan 2020.
\newblock URL
  \url{https://link.springer.com/article/10.1007/s42952-019-00019-7}.

\bibitem[Chen et~al.(2019)Chen, Drton, and Wang]{chenDrtonWang}
Wenyu Chen, Mathias Drton, and Y~Samuel Wang.
\newblock {On causal discovery with an equal-variance assumption}.
\newblock \emph{Biometrika}, 106\penalty0 (4):\penalty0 973--980, 09 2019.
\newblock ISSN 0006-3444.
\newblock \doi{10.1093/biomet/asz049}.
\newblock URL \url{https://doi.org/10.1093/biomet/asz049}.

\bibitem[Gao et~al.(2020)Gao, Ding, and
  Aragam]{nonParamRestrCondVar_NEURIPS2020_85c9f9ef}
Ming Gao, Yi~Ding, and Bryon Aragam.
\newblock A polynomial-time algorithm for learning nonparametric causal graphs.
\newblock In H.~Larochelle, M.~Ranzato, R.~Hadsell, M.F. Balcan, and H.~Lin,
  editors, \emph{Advances in Neural Information Processing Systems}, volume~33,
  pages 11599--11611. Curran Associates, Inc., 2020.
\newblock URL
  \url{https://proceedings.neurips.cc/paper/2020/file/85c9f9efab89cee90a95cb98f15feacd-Paper.pdf}.

\bibitem[Shimizu et~al.(2011)Shimizu, Inazumi, Sogawa, Hyv\"{a}rinen, Kawahara,
  Washio, Hoyer, and Bollen]{shimizu11}
Shohei Shimizu, Takanori Inazumi, Yasuhiro Sogawa, Aapo Hyv\"{a}rinen,
  Yoshinobu Kawahara, Takashi Washio, Patrik~O. Hoyer, and Kenneth Bollen.
\newblock Directlingam: A direct method for learning a linear non-gaussian
  structural equation model.
\newblock \emph{J. Mach. Learn. Res.}, 12\penalty0 (null):\penalty0
  1225–1248, July 2011.
\newblock ISSN 1532-4435.

\bibitem[Hyv\"{a}rinen and Smith(2013)]{pairwiseLikelihoodRatios}
Aapo Hyv\"{a}rinen and Stephen~M. Smith.
\newblock Pairwise likelihood ratios for estimation of non-gaussian structural
  equation models.
\newblock \emph{J. Mach. Learn. Res.}, 14\penalty0 (1):\penalty0 111–152,
  January 2013.
\newblock ISSN 1532-4435.

\bibitem[Wang and Drton(2019)]{wangHighDLingam}
Y~Samuel Wang and Mathias Drton.
\newblock {High-dimensional causal discovery under non-Gaussianity}.
\newblock \emph{Biometrika}, 107\penalty0 (1):\penalty0 41--59, 10 2019.
\newblock ISSN 0006-3444.
\newblock \doi{10.1093/biomet/asz055}.
\newblock URL \url{https://doi.org/10.1093/biomet/asz055}.

\bibitem[Zeng et~al.(2020)Zeng, Hao, Cai, Xie, Ou, and Huang]{ZENG2020130}
Yan Zeng, Zhifeng Hao, Ruichu Cai, Feng Xie, Liang Ou, and Ruihui Huang.
\newblock A causal discovery algorithm based on the prior selection of leaf
  nodes.
\newblock \emph{Neural Networks}, 124:\penalty0 130--145, 2020.
\newblock ISSN 0893-6080.
\newblock \doi{https://doi.org/10.1016/j.neunet.2019.12.020}.
\newblock URL
  \url{https://www.sciencedirect.com/science/article/pii/S0893608019304204}.

\bibitem[Hyv\"{a}rinen and Oja(2000)]{HYVARINEN2000411}
A.~Hyv\"{a}rinen and E.~Oja.
\newblock Independent component analysis: algorithms and applications.
\newblock \emph{Neural Networks}, 13\penalty0 (4):\penalty0 411--430, 2000.
\newblock ISSN 0893-6080.
\newblock \doi{https://doi.org/10.1016/S0893-6080(00)00026-5}.
\newblock URL
  \url{https://www.sciencedirect.com/science/article/pii/S0893608000000265}.

\bibitem[Meinshausen and Bühlmann(2006)]{neighborhoodLasso}
Nicolai Meinshausen and Peter Bühlmann.
\newblock {High-dimensional graphs and variable selection with the Lasso}.
\newblock \emph{The Annals of Statistics}, 34\penalty0 (3):\penalty0 1436 --
  1462, 2006.
\newblock \doi{10.1214/009053606000000281}.
\newblock URL \url{https://doi.org/10.1214/009053606000000281}.

\bibitem[Yao et~al.(2021)Yao, {van Velthoven}, Nguyen, Goldy, Sedeno-Cortes,
  Baftizadeh, Bertagnolli, Casper, Chiang, Crichton, Ding, Fong, Garren,
  Glandon, Gouwens, Gray, Graybuck, Hawrylycz, Hirschstein, Kroll, Lathia, Lee,
  Levi, McMillen, Mok, Pham, Ren, Rimorin, Shapovalova, Sulc, Sunkin, Tieu,
  Torkelson, Tung, Ward, Dee, Smith, Tasic, and Zeng]{yao2021scrna-seq}
Zizhen Yao, Cindy~T.J. {van Velthoven}, Thuc~Nghi Nguyen, Jeff Goldy,
  Adriana~E. Sedeno-Cortes, Fahimeh Baftizadeh, Darren Bertagnolli, Tamara
  Casper, Megan Chiang, Kirsten Crichton, Song-Lin Ding, Olivia Fong, Emma
  Garren, Alexandra Glandon, Nathan~W. Gouwens, James Gray, Lucas~T. Graybuck,
  Michael~J. Hawrylycz, Daniel Hirschstein, Matthew Kroll, Kanan Lathia,
  Changkyu Lee, Boaz Levi, Delissa McMillen, Stephanie Mok, Thanh Pham,
  Qingzhong Ren, Christine Rimorin, Nadiya Shapovalova, Josef Sulc, Susan~M.
  Sunkin, Michael Tieu, Amy Torkelson, Herman Tung, Katelyn Ward, Nick Dee,
  Kimberly~A. Smith, Bosiljka Tasic, and Hongkui Zeng.
\newblock A taxonomy of transcriptomic cell types across the isocortex and
  hippocampal formation.
\newblock \emph{Cell}, 184\penalty0 (12):\penalty0 3222--3241.e26, 2021.
\newblock ISSN 0092-8674.
\newblock \doi{https://doi.org/10.1016/j.cell.2021.04.021}.
\newblock URL
  \url{https://www.sciencedirect.com/science/article/pii/S0092867421005018}.

\bibitem[Saleeba et~al.(2019)Saleeba, Dempsey, Le, Goodchild, and
  McMullan]{neuronalTracing2019}
Christine Saleeba, Bowen Dempsey, Sheng Le, Ann Goodchild, and Simon McMullan.
\newblock A student’s guide to neural circuit tracing.
\newblock \emph{Frontiers in Neuroscience}, 13:\penalty0 897, 2019.
\newblock ISSN 1662-453X.
\newblock \doi{10.3389/fnins.2019.00897}.
\newblock URL
  \url{https://www.frontiersin.org/article/10.3389/fnins.2019.00897}.

\bibitem[Aragam et~al.(2019)Aragam, Gu, and Zhou]{sparsebnJSSv091i11}
Bryon Aragam, Jiaying Gu, and Qing Zhou.
\newblock Learning large-scale bayesian networks with the sparsebn package.
\newblock \emph{Journal of Statistical Software}, 91\penalty0 (11):\penalty0
  1–38, 2019.
\newblock \doi{10.18637/jss.v091.i11}.
\newblock URL
  \url{https://www.jstatsoft.org/index.php/jss/article/view/v091i11}.

\bibitem[Aragam and Zhou(2015)]{aragamZhou15a}
Bryon Aragam and Qing Zhou.
\newblock Concave penalized estimation of sparse gaussian bayesian networks.
\newblock \emph{Journal of Machine Learning Research}, 16\penalty0
  (69):\penalty0 2273--2328, 2015.
\newblock URL \url{http://jmlr.org/papers/v16/aragam15a.html}.

\bibitem[Cover and Thomas(2005)]{coverThomas2005}
Thomas~M. Cover and Joy~A. Thomas.
\newblock \emph{Inequalities in Information Theory}, chapter~17, pages
  657--687.
\newblock John Wiley \& Sons, Ltd, 2005.
\newblock ISBN 9780471748823.
\newblock \doi{https://doi.org/10.1002/047174882X.ch17}.
\newblock URL
  \url{https://onlinelibrary.wiley.com/doi/abs/10.1002/047174882X.ch17}.

\bibitem[Forbes et~al.(2010)Forbes, Evans, Hastings, and
  Peacock]{Forbes2010-hp}
Catherine Forbes, Merran Evans, Nicholas Hastings, and Brian Peacock.
\newblock \emph{Statistical Distributions}.
\newblock Wiley-Blackwell, Hoboken, NJ, 4 edition, November 2010.

\end{thebibliography}


\newpage
\appendix

\section{Greedy Choice of a Factor to Optimize the Joint Likelihood Function\label{append:scoreBasedGreedy}}
Let vector $X\sim f(x)$, where $f(x)$ corresponds to the density in Definition \ref{defn:lingam}. Consider $X$'s expected log-likelihood as a function of the permutation $\pi$:
\begin{equation}\label{likelihoodPi}
\mathcal{L}(\pi)=\sum_{j=1}^p\E_{X\sim f(x)}\left[\log g\left(X_{j}-[\B_{\cdot j}^\pi]^TX;\theta^\pi_j\right)\right],
\end{equation}
Here, $\B^{\pi}$ is the acyclic weighted adjacency matrix that arises from a population-level least squares objective such that the $\pi(j)$-th column is given by:
\[
\B^{\pi}_{ \cdot \pi(j) } = \arg\min_{ \substack{\theta\in\mathbb{R}^{p\times 1}:\ \theta_k=0\ \forall k\\\text{ s.t }\pi^{-1}(k)\ \geq\ j  }}\mathbb{E}\left[( X_{\pi(j)}-\theta^TX )^2\right].
\]
That is, the column $\B^{\pi}_{\cdot \pi(j)}$ is comprised of the least squares coefficients when linearly regressing $\pi(j)$ onto its predecessors, if any, in the ordering given by $\pi$. Moreover, $\theta^\pi_j$ is the corresponding scale parameter according to Assumption \ref{scaleLocation}.
Now let $\phi_j^\pi$ be the density for the Gaussian distribution having the same first two moments as:
\[
R_j^{\pi}:=X_j-[\B^{\pi}_{\cdot j}]^T X.
\]
Define
\[
\tilde{\mathcal{L}}(\pi):=\sum_{j=1}^p\E_{X\sim g(x)}\left[\log\phi_{j}^\pi\left(R_j^\pi\right)\right]\text{ and }\kappa:=\E_{\tilde{X}\sim g(x)}\left[\log\mathcal{N}(\tilde{X};\E[X],\V[X])\right].
\]
Here, $\mathcal{N}(x;\E[X],\V[X])$ denotes the density for a $p$-variate Gaussian distribution with the same first and second order moments as $X$. Due to the relation between $\mathbf{B}^\pi$ and the generalized Cholesky factorization of $\V[X]$, \citet{yeSimAnnealing} shows that we actually have the equality:
\begin{equation}\label{eqn:invarianceGaussPi}
\tilde{\mathcal{L}}(\pi)=\kappa.
\end{equation} 

Thus, maximizing \eqref{likelihoodPi} with respect to $\pi$ is the same as maximizing the expected log-likelihood ratio given by: 
\begin{equation}\label{likelihoodRatioPi}
(\mathcal{L}-\tilde{\mathcal{L}})(\pi)=\sum_{j=1}^p\E_{X\sim g(x)}\left[\log\frac{g\left(R_{j}^{{\pi}};\theta^\pi_j\right)}{\phi_{j}^\pi\left(R_j^\pi\right)}\right]=\mathcal{L}(\pi)-\kappa.
\end{equation}

With all this in mind, we can think of our choice of a node to append to the ordering $\At$ at step $t$ as greedily choosing the largest summand,
\[
\E_{X\sim g(x)}\left[\log\frac{g\left(R_{\hat{\pi}(t)}^{\hat{\pi}};\theta^{\hat{\pi}}_{\hat{\pi}(t)}\right)}{\phi_{\hat{\pi}(t)}^{\hat{\pi}}\left(R_{\hat{\pi}(t)}^{\hat{\pi}}\right)}\right]
\]
to add to the known log-likelihood ratio at step $t$:
\[
(\mathcal{L}-\tilde{\mathcal{L}})_t(\hat{\pi}):=\begin{cases}0&t=1\\
\sum_{j=1}^{t-1}\E_{X\sim g(x)}\left[\log\frac{g\left(R_{\hat{\pi}(j)}^\pi;\theta^{\hat{\pi}}_{\hat{\pi}(j)}\right)}{\phi_{\hat{\pi}(j)}^{\hat{\pi}}\left(R_{\hat{\pi}(j)}^\pi\right)}\right]& 2\leq t\leq p+1
\end{cases}.
\]
That is, our sequential application of Algorithm \ref{alg:continueOrdering} is attempting to greedily maximize \eqref{likelihoodRatioPi} one summand at a time.

\section{Proof of Theorem \ref{thm:validityOfLLR}\label{identifiabilityWithLLR}}

\subsubsection{Proof sketch for Theorem \ref{thm:validityOfLLR}}

The formal proof of Theorem \ref{thm:validityOfLLR} in \S~\ref{formalProofOfValidity} below is a relatively straightforward inductive application of the following reasoning after applying Algorithm \ref{alg:continueOrdering} at any given step $t$. Key to the proof, we note that \eqref{llr} can also be written equivalently as the difference of two KL-divergence terms:
\begin{equation}\begin{aligned}\label{diffKL}
\arg\max_{k\not\in\At}& \{D_{KL}\left(f_{kt}(\rkt)\middle|\middle|\phi(\rkt;\sigma_{kt})\right)- D_{KL}\left(f_{kt}(\rkt)\middle|\middle|g_{k}(\rkt;\eta_{kt})\right) \}.
\end{aligned}\end{equation}
Lemma \ref{residsLinearCombo} suggests that invalid nodes' residuals, $\Rkt$, are a linear combination of two or more entries in the vector $\epsilon$, while for valid nodes $\ell$ we have $\Rlt=\epsilon_\ell$. Under Assumption \ref{closureConvolution}, this means that the term $D_{KL}\left(f_{kt}(\rkt)\middle|\middle|g_{k}(\rkt;\eta_{kt})\right)$
in \eqref{diffKL} will be zero only if node $k$ is valid to continue the ordering at step $t$. The natural follow up question is what the behavior is for the term $D_{KL}\left(f_{kt}(\rkt)\middle|\middle|\phi(\rkt;\sigma_{kt})\right)$
in \eqref{diffKL} when $k$ is valid vs. invalid to continue the ordering. Lemma \ref{lemCLTesque} provides this insight: for valid nodes to continue an ordering, this term's value is no less than the same term's value for invalid nodes.

In light of Lemma \ref{residsLinearCombo}, Lemma \ref{lemCLTesque} makes sense under a Central Limit Theorem-like argument: a sum of two or more random variables is closer to Gaussian than each summand alone. Of particular note, a key result that helps show why Lemma \ref{lemCLTesque} holds is \textit{Theorem 17.8.1} of \citep{coverThomas2005}, a restatement of the entropy-power inequality. This restatement says that the differential entropy for a sum of any two independent random variables, $U$ and $V$, is no less than the differential entropy for the sum of two strategically defined Gaussian random variables, each having the same differential entropy as $U$ and $V$ (rather than the same first two moments), respectively.

\subsection{Formal Proof of Theorem \ref{thm:validityOfLLR}\label{formalProofOfValidity}}
\begin{proof}[Proof of Theorem \ref{thm:validityOfLLR}]\ \\
Our proof boils down to making the correct decision in Algorithm \ref{alg:continueOrdering} at step 1, then making the correct choice at step 2 assuming the choice in step 1 was correct, and so on. \\

For the sake of induction, let us assume that $\At$ is correct in the sense that $PA_a\subseteq \At$ for all $a\in\At$. This is true at the base case $t=1$ when $\At=\emptyset$, since having made no ordering choices also means we have made no mistakes.

Let $k\in \St$ be an invalid node to continue the ordering in the sense that $PA_k\cap\At\neq \emptyset$. And let $\ell\in\St$ be a valid node to continue the ordering in the sense that $PA(\ell)\subseteq\At$.

Lemma \ref{lemCLTesque} tells us that the least squares residual $\Rlt\sim f_{\ell t}(\rlt)$ is no closer to Gaussian than $\Rkt\sim f_{kt}(\rkt)$ in the sense that:
\[
D_{KL}\left(f_{kt}(\rkt)\middle|\middle|\phi(\rkt;\sigma_{kt})\right)\leq D_{KL}\left(f_{\ell t}(\rlt)\middle|\middle|\phi_{\ell t}(\rlt)\right)
\]
 
Furthermore, regularity Assumption \ref{closureConvolution} ensures that:
\[
D_{KL}\left(f_{kt}(\rkt)\middle|\middle|g_{k}(\rkt;\eta_{kt})\right)>0.
\]
On the other hand, so long as we properly specified the error density for node $\ell$, we have that:

\[
D_{KL}\left(f_{\ell t}(\rlt)\middle|\middle|g_{\ell}(\rlt;\eta_{\ell t})\right)=0. 
\]

Thus, 

\[\begin{aligned}
&\E_{f_{ kt}(r_{kt}) }\left[\log\frac{ g_k(\Rkt;\eta_{kt}) }{ \phi(\rkt;\sigma_{kt}) }\right]& =\ & D_{KL}\left(f_{kt}(\rkt)\middle|\middle|\phi(\rkt;\sigma_{kt})\right)- D_{KL}\left(f_{kt}(\rkt)\middle|\middle|g_{k}(\rkt;\eta_{kt})\right)\\
<\ & \E_{f_{ \ell t}(\rlt) }\left[\log\frac{ g_\ell(\Rlt;\eta_{\ell t}) }{ \phi_{\ell t}(\Rlt) }\right]&=\ &D_{KL}\left(f_{\ell t}(\rlt)\middle|\middle|\phi_{\ell t}(\rlt)\right).
\end{aligned}\]

Altogether, this implies that

\[
\max_{j\in\St} \mathcal{S}(j,\At) > \mathcal{S}(k,\At).
\]
and
\[
\ell = \arg\max_{j\in\St} \mathcal{S}(j,\At),
\]
since $\ell$ and $k$ were arbitrary valid and invalid nodes, respectively.

So at step $t$, we will always make the correct choice for a node to continue the ordering. 
\end{proof}

\subsection{Proofs of Lemma \ref{residsLinearCombo} and Lemma \ref{lemCLTesque}\label{identifiabilityAppendix}}

In this section, we formally prove Lemma \ref{residsLinearCombo}, Lemma \ref{lemCLTesque}.

\subsubsection{Some Useful Shorthand Notation}

Let us define some new strategic sets which contain indices in $[p]$, and review some we have been using already. 
\begin{itemize}
    \item The set $$\At=\begin{cases}\emptyset&t=1\\
    \{\hat{\pi}(1),\dots,\hat{\pi}(t-1)\}&t\geq 2\end{cases}.$$
    This is the partial ordering at step $t=1,2,\dots$. In our population-level identification results, we will typically assume it is correct at step $t$, which means that for all $a\in\At$, $PA_a\subset\At$.
    \item $\St=[p]\backslash\At$ is the set of unordered nodes at step $t$.
    \item $\MBk = PA_k\cup CH_k \cup_{j\in CH_k } PA_j$ is the Markov Blanket of node $k$.
    \item $\MBkhat$ is the Markov Blanket superset such that $\MBkhat\supseteq\MBk$. In finite data, we will typically estimate $\MBkhat$ by a procedure such as neighborhood lasso regression, so this containment may not hold. For the sake of this section, because we are deriving quantities at the population-level, we assume that $\MBkhat$ is known and contains the true Markov blanket. Note that trivially, we may consider $\MBkhat=[p]\backslash\{k\}$, and the results of this section would still hold.  
    \item $\MBkthat=\At\cap\MBkhat$ is the intersection of the Markov blanket superset with the partial ordering. 
    \item $\Lkt=\bigcup_{j\in\MBkthat}\{j\}\cup AN_j$, which are either nodes of $\MBkthat$ or ancestors of nodes in $\MBkthat$. When $\At$ is correct, it is necessarily the case that $\Lkt\subseteq\At$ for each $k\not\in\At$.
    \item $\Lkt^C$, the complement of set $\Lkt$ which either contains nodes in $\At$ which are not in $\Lkt$, i.e. the nodes of $\At\backslash\Lkt$, or which are unordered, i.e. we have that $\St\subseteq \Lkt^C$.
\end{itemize}

Note that for each node $k\in S_t$ we can write:
\begin{equation}\label{mixturek}
X_k=\M_{k\cdot}\eps = \M_{k \Lkt }\eps_{\Lkt}+\M_{k\Lkt^C}\eps_{ \Lkt^C },
\end{equation}
where the second equality holds since $\Lkt\cup\Lkt^C=[p]$. We can similarly write
\begin{equation}\label{mixtureOrderedNeighbors}
X_{\MBkthat} = \M_{ \MBkthat \cdot}\eps= \M_{ \MBkthat \Lkt}\eps_{ \Lkt }.
\end{equation}
We omit a term with $\epsilon_{\Lkt^C}$ since by definition of $\Lkt$, the sub-mixing matrix $\M_{\MBkthat\Lkt^C}$ is a zero matrix. 

Combining \eqref{mixturek} and \eqref{mixtureOrderedNeighbors},
\[
R_{kt} = \left(\M_{k \Lkt }-\beta^T_{kt} \M_{ \MBkthat \Lkt} \right)\eps_{\Lkt}+\M_{k\Lkt^C}\eps_{ \Lkt^C },
\]
which we will make use of in the proof for Lemma \ref{residsLinearCombo} below. 

\subsubsection{Lemma \ref{residsLinearCombo}: Characterizing nodes' residuals as linear combinations of independent components}
\begin{lemma}\label{residsLinearCombo} Assume that $\At$ is correct so far in the sense that for each $a\in\At$, we have $PA_a\subseteq\At$. Also assume Assumption \ref{nbhdsAccurate} holds. We have that: 
\begin{itemize}
\item If $k\in\St$ is a valid node to continue the ordering, i.e. $PA_k\subseteq \At$, then:
\[
R_{kt} = X_k-\beta_{kt}^TX_{\MBkthat} = \epsilon_k.
\]
\item Otherwise, if $k$ is not a valid node, then $R_{kt}$ is a linear combination of more than one independent component in $\epsilon$.
\end{itemize}

\begin{proof}[Proof of Lemma \ref{residsLinearCombo}]\ \\
\textbf{Case 1:} Assume $k$ is a valid node to continue the ordering in the sense that $PA_k\subseteq \At$. We want to show that $R_{kt}=\epsilon_k$. In this case, $\M_{k\Lkt^C}$ has a non-zero entry corresponding to only $\M_{kk}=1$. This is because $AN_k =\text{support}\left(\textbf{M}_{k\cdot}\right)\backslash\{k\} \subseteq \Lkt$, which in turn holds because $PA_k\subseteq MB_k\cap\At \subseteq \MBkthat$. Thus we have
\[
\M_{k\Lkt^C}\eps_{ \Lkt^C }=\epsilon_k.
\]
So we have left to show that 
\[
\left(\M_{k \Lkt }-\beta^T_{kt} \M_{ \MBkthat \Lkt} \right)\eps_{\Lkt}=0.
\]

Recall that $\B$ is the weighted adjacency matrix for the underlying LiNGAM. We have that $\text{support}(\B_{\cdot k})= PA_k$. Let us index the entries of the column vector $\B_{\cdot k}$ by $\MBkthat$ and denote this as $\B_{\MBkthat k }$. One thing that could be helpful to prove is that if $k$ is valid, then:
\[
\beta_{kt} = \B_{\MBkthat k }. 
\]

Because $\text{support}(\B_{\cdot k})=PA_k$ and $PA_k\subseteq \MBkthat$, consider that
\[
X_k = X^T\B_{\cdot k} + \epsilon_k = X_{\MBkthat}^T\B_{\MBkthat k} + \epsilon_k,
\]
with $\eps_k\indep X_{ \MBkthat k }$ and $\E[\eps_k]=0$. Thus,
\begin{equation}
\begin{aligned}
&\beta_{kt}&=\ &\left(\E\left[X_{\MBkthat}X_{\MBkthat}^T\right]\right)^{-1}\left(\E\left[X_{\MBkthat}X_{\MBkthat}^T\right]\B_{\MBkthat k}+\E\left[X_{\MBkthat}\eps_k\right]\right)\\
&&=\ &\left(\E\left[X_{\MBkthat}X_{\MBkthat}^T\right]\right)^{-1}\E\left[X_{\MBkthat}X_{\MBkthat}^T \right] \B_{\MBkthat k}\\
&&=\ & \B_{\MBkthat k},
\end{aligned}
\end{equation}
as we wanted. 


{It follows that $X_k=\mathbf{B}_{\MBkthat k }^TX_{\MBkthat}+\eps_k=\beta_{kt}^TX_{\MBkthat}+\eps_k$. This then means that $R_{kt}=X_k-\beta_{kt}^TX_{\MBkthat}=\eps_k$, as we wanted to show. }


\textbf{Case 2:} Assume $k$ is not a valid node. All we need in this case for our identifiability proof is that $R_{kt}$ is a linear combination of more than one independent component. This is the case because if $k$ is invalid to continue the ordering, then we have that there exists at least one $j\in PA_k$ such that $j\in S_t$ (unordered) and therefore $j\in \Lkt^C$. Recall that:

\[
R_{kt} = \left(\M_{k \Lkt }-\beta^T_{kt} \M_{ \MBkthat \Lkt} \right)\eps_{\Lkt}+\M_{k\Lkt^C}\eps_{ \Lkt^C }.
\]

Note that it is necessarily the case that $\M_{kj}\neq 0$, otherwise $j\not\in PA_k$. Thus, $R_{kt}$ includes the sum $\M_{kj}\eps_j+\eps_k$. That is, $R_{kt}$ in this case is a linear combination of more than one independent component in $\eps$. Note that $R_{kt}$ could be a linear combination of more entries in $\eps$, in addition to $\eps_j$ and $\eps_k$. \\



\end{proof}
\end{lemma}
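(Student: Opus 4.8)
The plan is to work directly from the mixing-matrix decomposition of the residual that was derived just before the statement. Using $X=\M\eps$ together with the partition $[p]=\Lkt\cup\Lkt^C$, I would start from the identity
\[
\Rkt=\left(\M_{k\Lkt}-\beta_{kt}^T\M_{\MBkthat\Lkt}\right)\eps_{\Lkt}+\M_{k\Lkt^C}\eps_{\Lkt^C},
\]
which isolates the dependence of $\Rkt$ on two disjoint blocks of independent errors. The two cases then reduce to understanding (i) which entries of the row $\M_{k\cdot}$ are nonzero and (ii) whether the coefficient vector multiplying $\eps_{\Lkt}$ vanishes. The structural fact I would lean on throughout is that $\mathrm{support}(\M_{k\cdot})=AN_k\cup\{k\}$, together with the bookkeeping facts that when $\At$ is correct one has $\Lkt\subseteq\At$ while $\St\subseteq\Lkt^C$.

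For the valid case ($PA_k\subseteq\At$), the goal is $\Rkt=\eps_k$. First I would show $AN_k\subseteq\Lkt$: validity plus Assumption~\ref{nbhdsAccurate} give $PA_k\subseteq\MBkthat$, and since $AN_k=PA_k\cup\bigcup_{a\in PA_k}AN_a$, every ancestor of $k$ is either a member of $\MBkthat$ or an ancestor of one, hence lies in $\Lkt$ by definition. Consequently $\M_{k\Lkt^C}$ has a single nonzero entry $\M_{kk}=1$, so the second block collapses to $\eps_k$. It then remains to annihilate the first block, which I would do by computing the population coefficient exactly: writing $X_k=X_{\MBkthat}^T\B_{\MBkthat k}+\eps_k$ (legitimate because $\mathrm{support}(\B_{\cdot k})=PA_k\subseteq\MBkthat$) and using $\eps_k\indep X_{\MBkthat}$ with $\E[\eps_k]=0$, the normal equations collapse to $\beta_{kt}=\B_{\MBkthat k}$; substituting back yields $\Rkt=\eps_k$.

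The invalid case ($PA_k\not\subseteq\At$) is short. I would pick an unordered parent $j\in PA_k\cap\St$, so that $j\in\Lkt^C$, and note $\M_{kj}\neq0$ because $j\in PA_k\subseteq AN_k$. Since $k$ is itself unordered, $k\in\Lkt^C$ and $\M_{kk}=1$, so the block $\M_{k\Lkt^C}\eps_{\Lkt^C}$ already carries the two independent terms $\M_{kj}\eps_j+\eps_k$ with nonzero coefficients. The remaining contribution is supported on $\Lkt$, which is disjoint from $\{j,k\}$, so it cannot cancel either term, and $\Rkt$ is a linear combination of at least two independent components of $\eps$.

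The main obstacle, and the only step needing genuine care, is the exact recovery $\beta_{kt}=\B_{\MBkthat k}$ in the valid case. Everything hinges on the orthogonality $\E[X_{\MBkthat}\eps_k]=0$, which I would justify by observing that $X_{\MBkthat}$ is a deterministic function of $\{\eps_a:a\in\Lkt\}$ via \eqref{mixtureOrderedNeighbors}, and that $k\notin\Lkt$ since $k\in\St$; independence of the errors and $\E[\eps_k]=0$ then give the claim. I would also flag that this argument presumes $\E[X_{\MBkthat}X_{\MBkthat}^T]$ is invertible so that $\beta_{kt}$ is well defined, the population analogue of the full-column-rank requirement imposed in the finite-sample section.
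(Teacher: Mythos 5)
Your proposal is correct and follows essentially the same route as the paper's proof: the same residual decomposition over $\Lkt$ and $\Lkt^C$, the same identification $\beta_{kt}=\B_{\MBkthat k}$ via the normal equations and the orthogonality $\E[X_{\MBkthat}\eps_k]=0$ in the valid case, and the same observation that an unordered parent $j\in PA_k\cap\St$ contributes $\M_{kj}\eps_j$ alongside $\eps_k$ in the invalid case. Your added justifications (the recursion $AN_k=PA_k\cup\bigcup_{a\in PA_k}AN_a$ to get $AN_k\subseteq\Lkt$, and the invertibility caveat for $\E[X_{\MBkthat}X_{\MBkthat}^T]$) only make explicit what the paper leaves implicit.
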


\subsubsection{Some Information Theory Definitions and Results}

We now present some straightforward information theoretic results. They are meant to help demonstrate that our surrogate optimization (now a likelihood ratio) approach for Algorithm \ref{alg:continueOrdering} leads to the identifiability of a causal order. These lemmas are used later to prove Lemma \ref{lemCLTesque}, a key result that says valid nodes $j$ in a LiNGAM are no closer to Gaussian compared to invalid nodes $k$, conditional on the nodes in $\widehat{N}_{jt}$ and $\MBkthat$, respectively.

\begin{definition}[Differential Entropy]\ \\
For a continuous random variable $X$ with density ${p}(x)$, denote $\E_{p(x)}\left[\cdot\right]$ to be expectation with respect to $p(x)$. The differential entropy of $X$ is given by: 
\[
\mathbf{h}\left(X\right)=\mathbb{E}_{{p}(x)}\left[\log\frac{1}{{p}(X)}\right].
\]

\end{definition}

\begin{lemma}[\label{entPow}Restatement of the entropy power inequality]\ \\
Consider two independent random variables $X\sim{p}(x)$ and $Y\sim{p}(y)$, and let $X'\sim\mathcal{N}\left(\mathbb{E}[X'],\V[X']\right)$ and $Y'\sim\mathcal{N}\left(\mathbb{E}[Y'],\V[Y']\right)$ be independent random variables such that $\mathbf{h}(X)=\mathbf{h}(X')$ and $\mathbf{h}(Y)=\mathbf{h}(Y')$. Then:
\[
\mathbf{h}(X+Y)\geq\mathbf{h}(X'+Y').
\]
\begin{proof}
This is exactly \textit{Theorem 17.8.1} of \citep{coverThomas2005}, so we refer the reader to their proof.\\
\end{proof}
\end{lemma}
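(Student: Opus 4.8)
The plan is to take the classical entropy power inequality (EPI) in its entropy-power form as the foundational black-box result and to derive the stated ``matched-entropy'' restatement from it by a short computation. Recall that for a continuous random variable $Z$ the entropy power is $N(Z):=\frac{1}{2\pi e}e^{2\mathbf{h}(Z)}$, and that the EPI asserts that for independent $X$ and $Y$,
\[
N(X+Y)\geq N(X)+N(Y),\qquad\text{equivalently}\qquad e^{2\mathbf{h}(X+Y)}\geq e^{2\mathbf{h}(X)}+e^{2\mathbf{h}(Y)}.
\]
Granting this, everything that follows is elementary bookkeeping with the closed-form entropy of a Gaussian.

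First I would record the variances of the matched Gaussians. For a Gaussian $W\sim\mathcal{N}(\mu,\sigma^2)$ one has $\mathbf{h}(W)=\tfrac{1}{2}\log(2\pi e\,\sigma^2)$, a strictly increasing bijection from $\sigma^2\in(0,\infty)$ onto $\mathbf{h}\in(-\infty,\infty)$; this is what guarantees the matched Gaussians $X',Y'$ exist and are determined by the entropy constraints. Since $X'$ is Gaussian with $\mathbf{h}(X')=\mathbf{h}(X)$, inverting the identity forces $\V[X']=\frac{1}{2\pi e}e^{2\mathbf{h}(X)}=N(X)$, and likewise $\V[Y']=N(Y)$. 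Thus the matched Gaussians carry precisely the entropy powers of $X$ and $Y$.

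Next, because $X'$ and $Y'$ are independent Gaussians, their sum is Gaussian with variance $\V[X']+\V[Y']=N(X)+N(Y)$, so
\[
\mathbf{h}(X'+Y')=\tfrac{1}{2}\log\!\big(2\pi e\,(N(X)+N(Y))\big).
\]
Finally, the EPI in the form $e^{2\mathbf{h}(X+Y)}\geq e^{2\mathbf{h}(X)}+e^{2\mathbf{h}(Y)}=2\pi e\,(N(X)+N(Y))$ gives, after taking logarithms, $\mathbf{h}(X+Y)\geq\tfrac{1}{2}\log(2\pi e\,(N(X)+N(Y)))=\mathbf{h}(X'+Y')$, which is the claim.

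The genuinely hard part is the classical EPI itself, which I am invoking rather than reproving; a self-contained argument would follow Stam's route, combining de Bruijn's identity relating $\frac{d}{dt}\mathbf{h}(Z+\sqrt{t}\,G)$ to the Fisher information of $Z+\sqrt{t}\,G$ along the heat flow with the Fisher-information (Stam) inequality $1/J(X+Y)\geq 1/J(X)+1/J(Y)$, and integrating the resulting differential inequality. Everything after granting the EPI is the elementary entropy-power computation above; the only subtlety is keeping the matched-entropy normalization straight, namely $\V[X']=N(X)$ rather than $\V[X']=\V[X]$, since it is precisely this choice of Gaussian surrogate that makes the EPI collapse into the exact inequality $\mathbf{h}(X+Y)\geq\mathbf{h}(X'+Y')$.
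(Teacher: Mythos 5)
Your proposal is correct and takes essentially the same route as the paper, whose entire proof is a citation to \textit{Theorem 17.8.1} of \citep{coverThomas2005} (the classical exponential-form EPI, $e^{2\mathbf{h}(X+Y)}\geq e^{2\mathbf{h}(X)}+e^{2\mathbf{h}(Y)}$, which Cover and Thomas accompany with exactly this matched-Gaussian restatement); you are black-boxing the same theorem rather than reproving it. The only substance you add is the bookkeeping the citation leaves implicit—correctly identifying $\V[X']=\frac{1}{2\pi e}e^{2\mathbf{h}(X)}=N(X)$ and $\V[Y']=N(Y)$, so that $\mathbf{h}(X'+Y')=\tfrac{1}{2}\log\left(2\pi e\,(N(X)+N(Y))\right)$ and the exponential form collapses to the stated inequality—which is a worthwhile clarification, since the entropy-matched Gaussian carries the entropy power rather than the variance of the original variable.
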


\begin{lemma}[\label{kl2gauss}KL Divergence from Gaussianity]\ \\
Let $X\sim {p}(x)$ and $q(x)$ the density for $\tilde{X}\sim\mathcal{N}\left(\mathbb{E}\left[X\right],\text{Cov}\left[X\right]\right)$. 
\begin{equation}
D_{KL}\left(p(x)\middle|\middle|q(x)\right)=\mathbf{h}(\tilde{X})-\mathbf{h}\left(X\right).
\end{equation}

As in Lemma \ref{entPow}, let $X'\sim\mathcal{N}\left(\mathbb{E}[X'],\V[X']\right)$ such that $\mathbf{h}(X)=\mathbf{h}(X')$. We can equivalently write the KL divergence from Gaussianity as:
\[
D_{KL}\left(p(x)\middle|\middle|q(x)\right)=\frac{1}{2}\log\left( \frac{ \V[X]}{ \V[X']} \right).
\]
\begin{proof}\ \\
Because
\[
\mathbf{h}(\tilde{X})=\mathbb{E}_{\tilde{X}\sim q(x)}\left\{\log\frac{1}{q(\tilde{X})}\right\}=\mathbb{E}_{X\sim p(x)}\left\{\log\frac{1}{q(X)}\right\},
\]
by properties of this normal distribution (namely, that $\E[\log q(X)]\propto \V[X]=\V[\tilde{X}]$) we have that:
\[
D_{KL}\left(p(x)\middle|\middle|q(x)\right)=\mathbf{h}(\tilde{X})-\mathbf{h}\left(X\right).
\]

Noting that the differential entropy for any $\mathcal{N}(\mu,\sigma^2)$ is $\frac{1}{2}\log(2\pi e \sigma^2)$ and our assumption that $\mathbf{h}(X)=\mathbf{h}(X')$, we arrive at the second equality:
\[
D_{KL}\left(p(x)\middle|\middle|q(x)\right)=\frac{1}{2}\log\left( \frac{2\pi e \V[X]}{2\pi e \V[X']} \right).
\]
Note that also $\V[X]=\V[\tilde{X}]\geq \V[X'] \iff D_{KL}\left(p(x)\middle|\middle|q(x)\right)\geq 0,$ which is the case because KL-divergence is always non-negative. 

\end{proof}
\end{lemma}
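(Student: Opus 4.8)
The plan is to get the first identity straight from the definition of KL divergence and then read off the second one from the matched-entropy normalization. First I would expand
\[
D_{KL}\left(p(x)\middle|\middle|q(x)\right)=\E_{p(x)}\left[\log p(X)\right]-\E_{p(x)}\left[\log q(X)\right]=-\mathbf{h}(X)-\E_{p(x)}\left[\log q(X)\right],
\]
so that everything hinges on the cross term $\E_{p(x)}[\log q(X)]$. Since $q$ is the Gaussian density whose mean and covariance match those of $X$, its log-density $\log q(x)$ is an affine function of $x$ together with the quadratic form $(x-\mu)^\top\Sigma^{-1}(x-\mu)$. Taking the $p$-expectation of a quadratic functional produces something that depends on $X$ only through its first two moments, and by construction these coincide with the moments of $\tilde{X}$; hence $\E_{p(x)}[\log q(X)]=\E_{q(x)}[\log q(\tilde{X})]=-\mathbf{h}(\tilde{X})$. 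Substituting gives $D_{KL}(p\|q)=\mathbf{h}(\tilde{X})-\mathbf{h}(X)$, the first claimed equality.

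For the second equality (the scalar case that feeds into Lemma~\ref{lemCLTesque}) I would invoke the closed form of the Gaussian differential entropy. Because $\tilde{X}\sim\mathcal{N}(\E[X],\V[X])$ we have $\mathbf{h}(\tilde{X})=\tfrac{1}{2}\log(2\pi e\,\V[X])$, and because $X'$ is chosen with $\mathbf{h}(X')=\mathbf{h}(X)$ we likewise have $\mathbf{h}(X)=\mathbf{h}(X')=\tfrac{1}{2}\log(2\pi e\,\V[X'])$. Subtracting these cancels the common $2\pi e$ factor and leaves $D_{KL}(p\|q)=\tfrac{1}{2}\log\bigl(\V[X]/\V[X']\bigr)$, as stated. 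Appealing to non-negativity of the KL divergence then forces $\V[X]=\V[\tilde{X}]\geq\V[X']$, i.e.\ the matched-moment Gaussian has variance at least that of the matched-entropy Gaussian, which is exactly the comparison the downstream lemma will use.

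The main obstacle, and really the only non-routine step, is justifying the cross-term identity $\E_{p(x)}[\log q(X)]=-\mathbf{h}(\tilde{X})$: the care required is to make explicit that only an affine-plus-quadratic functional of $X$ appears in $\log q$, so that its $p$-expectation is pinned down entirely by the mean and covariance that $p$ shares with $q$. Once that is granted, both equalities reduce to bookkeeping. A secondary point I would flag is the mild notational shift between the general statement, which is written with $\text{Cov}[X]$ and therefore holds in any dimension, and the second equality, which uses the scalar variance $\V[X]$ and the univariate entropy formula; I would simply note that the second identity is the one-dimensional specialization, which is the only form needed for the scalar residuals $R_{kt}$.
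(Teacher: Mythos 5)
Your proposal is correct and follows essentially the same route as the paper: both reduce the first identity to the observation that $\E_{p(x)}[\log q(X)]$ depends on $p$ only through the mean and covariance it shares with $q$ (you spell out the affine-plus-quadratic structure of $\log q$ slightly more explicitly than the paper's terse ``$\E[\log q(X)]\propto\V[X]=\V[\tilde{X}]$''), and both obtain the second identity by plugging the Gaussian entropy formula into $\mathbf{h}(\tilde{X})-\mathbf{h}(X')$. No substantive differences.
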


This well known result also implies that the normal distribution is the maximum entropy distribution when we constrain the first and second order moments of each distribution to be the same.

\begin{lemma}[Same distance to Gaussianity\label{dist2gauss}]\ \\
Let $\tilde{\eps}_k\sim\mathcal{N}(0,\V[\eps_k])$ with density $q_k(\cdot)$ for each $k=1,2,\dots,p$. Also let $\eps_k'\sim\mathcal{N}(0,\V[\eps'_k])$ such that $\mathbf{h}(\eps_k')=\mathbf{h}(\eps_k)$. If $\eps$ in our LiNGAM satisfies Assumption \ref{scaleLocation}, then there exists a constant $\gamma\geq 0$ such that
\[
D_{KL}\left(g(\eps_k;\theta_k)\middle|\middle|q_k(\eps_k)\right)=\gamma
\]
and
\[
\frac{\V[\eps_k]}{\V[\eps_k']}=\tilde{\gamma}=\exp(2\gamma)
\]
for all $k=1,2,\dots,p$.
\begin{proof}\ \\
From Lemma \ref{kl2gauss}, we have that:
\[
D_{KL}\left(g(\eps_k;\theta_k)\middle|\middle|q_k(\eps_k)\right)=\mathbf{h}(\tilde{\eps}_k)-\mathbf{h}(\eps_k').
\]
Noting Assumption \ref{scaleLocation} and properties of differential entropy under a rescaling, it follows that for $U\sim g(\cdot;\theta_0)$:
\[
\mathbf{h}(\eps_k)=\mathbf{h}(U)+\log(\theta_k/\theta_0).
\]
Let $U'\sim\mathcal{N}(0,\V[U'])$ such that $\mathbf{h}(U')=\mathbf{h}(U)$. We have also that
\[
\mathbf{h}(\eps_k')=\mathbf{h}(U')+\log(\theta_k/\theta_0),
\]
based on the construction of both $\eps_k'$ and $U'$.

Similar to $\tilde{\eps}_k$, let $\tilde{U}\sim\mathcal{N}(0,\V[U])$. Thus, regardless of $k=1,2,\dots,p$, we have that:
\[
\frac{1}{2}\log\left(\frac{2\pi e\V[\eps_k]}{2\pi e\V[\eps_k']}\right)=\mathbf{h}(\tilde{\eps}_k)-\mathbf{h}(\eps_k')=\mathbf{h}(\tilde{U})-\mathbf{h}(U')=:\gamma.
\]

\end{proof}

\end{lemma}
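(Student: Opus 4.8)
The plan is to reduce both displayed claims to the scale-invariance of the ``distance to Gaussianity'' and then exploit the scale-location structure of Assumption \ref{scaleLocation}, which forces every error $\eps_k$ to be a rescaling of a single fixed base variable.

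First I would invoke Lemma \ref{kl2gauss} to rewrite the divergence of interest purely in terms of differential entropies. Since $q_k$ is the density of the moment-matched Gaussian $\tilde{\eps}_k\sim\mathcal{N}(0,\V[\eps_k])$, Lemma \ref{kl2gauss} gives
\[
D_{KL}\left(g(\eps_k;\theta_k)\middle|\middle|q_k(\eps_k)\right)=\mathbf{h}(\tilde{\eps}_k)-\mathbf{h}(\eps_k).
\]
Here I use the defining property $\mathbf{h}(\eps_k')=\mathbf{h}(\eps_k)$, so that $\mathbf{h}(\eps_k)$ and $\mathbf{h}(\eps_k')$ are interchangeable in what follows. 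Second, I would bring in Assumption \ref{scaleLocation}, which states $\eps_k\overset{d}{=}(\theta_k/\theta_0)U$ for a fixed mean-zero $U\sim g(\cdot;\theta_0)$. Using the standard scaling identity for differential entropy, $\mathbf{h}(cX)=\mathbf{h}(X)+\log|c|$, this yields $\mathbf{h}(\eps_k)=\mathbf{h}(U)+\log(\theta_k/\theta_0)$. The matched Gaussian scales the same way: because $\V[\eps_k]=(\theta_k/\theta_0)^2\V[U]$, we have $\tilde{\eps}_k\overset{d}{=}(\theta_k/\theta_0)\tilde{U}$ with $\tilde{U}\sim\mathcal{N}(0,\V[U])$, hence $\mathbf{h}(\tilde{\eps}_k)=\mathbf{h}(\tilde{U})+\log(\theta_k/\theta_0)$.

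Third, subtracting the two entropies makes the $\log(\theta_k/\theta_0)$ terms cancel, leaving
\[
D_{KL}\left(g(\eps_k;\theta_k)\middle|\middle|q_k(\eps_k)\right)=\mathbf{h}(\tilde{U})-\mathbf{h}(U)=:\gamma,
\]
which is manifestly independent of $k$, and is $\geq 0$ since it is a KL divergence (equivalently, since the Gaussian maximizes entropy at fixed variance). This establishes the first displayed equality. For the variance ratio I would use the second form of Lemma \ref{kl2gauss}, namely $D_{KL}=\tfrac{1}{2}\log(\V[\eps_k]/\V[\eps_k'])$; setting this equal to the constant $\gamma$ just obtained and exponentiating gives $\V[\eps_k]/\V[\eps_k']=\exp(2\gamma)=:\tilde{\gamma}$ for every $k$, as claimed.

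I do not expect a serious obstacle here: the substance is simply that negentropy is scale-invariant while the scale-location family makes all errors differ only by a multiplicative scale, so the $k$-dependence enters only through a $\log(\theta_k/\theta_0)$ term that cancels. The one point requiring care is the bookkeeping of which auxiliary Gaussian is moment-matched ($\tilde{\eps}_k$, hence $\tilde{U}$) versus entropy-matched ($\eps_k'$, hence $U'$), so that the two representations of the divergence supplied by Lemma \ref{kl2gauss} are applied to the correct objects and the cancellation of the scale terms is legitimate.
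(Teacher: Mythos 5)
Your proposal is correct and follows essentially the same route as the paper's proof: apply Lemma \ref{kl2gauss} to express the divergence as an entropy difference, use the scaling identity $\mathbf{h}(cX)=\mathbf{h}(X)+\log|c|$ together with Assumption \ref{scaleLocation} so that the $\log(\theta_k/\theta_0)$ terms cancel, and then read off the variance ratio from the second form of Lemma \ref{kl2gauss}. The only cosmetic difference is that you write $\gamma=\mathbf{h}(\tilde{U})-\mathbf{h}(U)$ directly, whereas the paper routes through the entropy-matched Gaussian $U'$ with $\mathbf{h}(U')=\mathbf{h}(U)$, which is the same quantity.
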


\subsubsection{Lemma \ref{lemCLTesque}: KL Divergence from Gaussianity for valid and invalid nodes' residuals}

\begin{lemma}\label{lemCLTesque} Let $X\in\mathbb{R}^p$  be a LiNGAM from Definition \ref{defn:lingam} that satisfies Assumptions \ref{nbhdsAccurate} and \ref{scaleLocation}. Assume that $\At$ is correct in the sense that $PA_a\subseteq \At$ for all $a\in\At$. Let $k\in [p]\backslash\At$ be an invalid node to continue the ordering in the sense that there exists $j\in PA_k$ such that $j\in[p]\backslash\At$. And let $\ell\in[p]\backslash\At$ be a valid node to continue the ordering in the sense that $PA(\ell)\subseteq\At$. Then the least squares residual $\Rlt\sim f_{\ell t}(\rlt)$ is no closer to Gaussian than $\Rkt\sim f_{kt}(\rkt)$ in the sense that:
\begin{equation}
D_{KL}\left(f_{kt}(\rkt)\middle|\middle|\phi(\rkt;\sigma_{kt})\right)\leq D_{KL}\left(f_{\ell t}(\rlt)\middle|\middle|\phi_{\ell t}(\rlt)\right),
\end{equation}
where $\phi_{k t}$ and $\phi_{\ell t}$ are the respective densities for
\[
\tilde{R}_{kt}\sim\mathcal{N}(\mathbb{E}[\Rkt],\V[\Rkt])\text{ and }\tilde{R}_{\ell t}\sim\mathcal{N}(\mathbb{E}[\Rlt],\V[\Rlt]).
\]
\begin{proof}[Proof of Lemma \ref{lemCLTesque}]\ \\
For each $j\in[p]$, let $\epsilon'_j$ be a normally distributed random variable such that $\mathbf{h}(\epsilon_j')=\mathbf{h}(\epsilon_j)$, while $\tilde{\eps}_j$ is distributed as $\mathcal{N}(\E[\eps_j],\V[\eps_j])$. Here, for all $j,k\in\{1,2,\dots,p\}$, $\tilde{\eps}_j\indep\tilde{\eps}_k$ (unless $j=k$) and $\tilde{\eps}_j\indep\eps_k'$ (even if $j=k$).

Recall also that for $j\in\St$
\[
\Rjt = \left(\M_{j \Lkt }-\beta^T_{jt} \M_{ \MBjthat \Ltj} \right)\eps_{\Ltj}+\M_{j\Ltj^C}\eps_{ \Ltj^C } = \sum_{i\in[p]}\delta_{ij}\eps_i,
\]

where the coefficients $\delta_{ij}$ in the last equality are used for shorthand. Note that $\delta_{jj}=1$ always. And if $j$ is invalid to continue the ordering, then also $\delta_{ij}\neq 0$ for at least one other $i\in[p]\backslash\{j\}$, based on Lemma \ref{residsLinearCombo}.

The relation between the quantities of interest is as follows:
\begin{equation}\footnotesize
\begin{aligned}
&D_{KL}\left(f_{kt}(\rkt)\middle|\middle|\phi(\rkt;\sigma_{kt})\right) & =\ &\mathbf{h}(\tilde{R}_{kt} ) -\mathbf{h}\left(R_{kt}\right) &\text{ by Lemma \ref{kl2gauss} }\\
& & =\ &\mathbf{h}\left(\sum_{i\in[p]}\delta_{ik}\tilde{\eps}_i \right) -\mathbf{h}\left(\sum_{i\in[p]}\delta_{ik}{\eps}_i\right) &\text{ Notice: }\tilde{R}_{kt}\overset{d}{=}\sum_{i\in[p]}\delta_{ik}\tilde{\eps}_i\\
&&\leq\ &\mathbf{h}\left(\sum_{i\in[p]}\delta_{ik}\tilde{\eps}_i \right) -\mathbf{h}\left(\sum_{i\in[p]}\delta_{ik}\eps'_i\right) &\text{ by Lemma \ref{entPow} }\\
&&=\ & \frac{1}{2}\log\left( \frac{2\pi e \sum_{i\in[p] }\delta_{ik}^2\var(\eps_i) }{2\pi e \sum_{i\in [p] }\delta_{ik}^2\var(\eps'_i)} \right)&\text{ by normality of the }\tilde{\eps}_i,\eps_i'\\
&&=\ & \frac{1}{2}\log\left( \frac{2\pi e \tilde{\gamma}\sum_{i\in [p] }\delta_{ik}^2\var(\epsilon'_i) }{2\pi e \sum_{i\in[p] }\delta_{ik}^2\var(\epsilon'_i)} \right)&\text{ by Lemma \ref{dist2gauss} }\\
&&=\ & \gamma &\\
&&=\ &D_{KL}\left(f_{\ell t}(\rlt)\middle|\middle|\phi_{\ell t}(\rlt)\right),&\\
\end{aligned}
\end{equation}
as we wanted (Recall $R_{\ell t}=\epsilon_\ell$ by Lemma \ref{residsLinearCombo}). 

\end{proof}

\end{lemma}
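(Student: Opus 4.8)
The plan is to combine the structural characterization of residuals from Lemma~\ref{residsLinearCombo} with the information-theoretic facts assembled in Lemmas~\ref{entPow}--\ref{dist2gauss}. First I would invoke Lemma~\ref{residsLinearCombo}: for the valid node $\ell$ the residual collapses to a single independent component, $\Rlt=\eps_\ell$, whereas for the invalid node $k$ the residual is a genuine linear combination $\Rkt=\sum_{i\in[p]}\delta_{ik}\eps_i$ with $\delta_{kk}=1$ and at least one further nonzero coefficient. The valid side is then immediate: by Lemma~\ref{dist2gauss}, the distance-to-Gaussianity of any single noise coordinate is the universal constant $\gamma$, so $D_{KL}(f_{\ell t}(\rlt)\,\|\,\phi_{\ell t}(\rlt))=\gamma$. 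The whole problem thus reduces to showing that the invalid node's residual satisfies $D_{KL}(f_{kt}(\rkt)\,\|\,\phi(\rkt;\sigma_{kt}))\le\gamma$.

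For the invalid side I would rewrite the KL divergence from Gaussianity using Lemma~\ref{kl2gauss} as the entropy gap $\mathbf{h}(\tilde R_{kt})-\mathbf{h}(\Rkt)$, where $\tilde R_{kt}$ is the moment-matched Gaussian. The key observation is that $\tilde R_{kt}\overset{d}{=}\sum_i\delta_{ik}\tilde\eps_i$, a sum of independent Gaussians matched in mean and variance to the $\eps_i$. To control the non-Gaussian entropy $\mathbf{h}(\Rkt)=\mathbf{h}(\sum_i\delta_{ik}\eps_i)$ from below I would introduce, for each $i$, the entropy-matched Gaussian $\eps_i'$ (so $\mathbf{h}(\eps_i')=\mathbf{h}(\eps_i)$, hence $\mathbf{h}(\delta_{ik}\eps_i')=\mathbf{h}(\delta_{ik}\eps_i)$) and apply the entropy power inequality of Lemma~\ref{entPow} to get $\mathbf{h}(\sum_i\delta_{ik}\eps_i)\ge\mathbf{h}(\sum_i\delta_{ik}\eps_i')$. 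This yields the upper bound $D_{KL}(f_{kt}\,\|\,\phi)\le\mathbf{h}(\sum_i\delta_{ik}\tilde\eps_i)-\mathbf{h}(\sum_i\delta_{ik}\eps_i')$, a difference of entropies of two Gaussians.

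Since both arguments are now Gaussian, I would replace each entropy by $\tfrac12\log(2\pi e\cdot\text{variance})$ and use independence to write the variances as $\sum_i\delta_{ik}^2\var(\eps_i)$ and $\sum_i\delta_{ik}^2\var(\eps_i')$, respectively. The decisive step is to apply the variance-ratio half of Lemma~\ref{dist2gauss}, namely $\var(\eps_i)/\var(\eps_i')=\tilde\gamma=\exp(2\gamma)$ for \emph{every} $i$: because this inflation factor is the same across all coordinates, it factors out of the weighted sum regardless of the coefficients $\delta_{ik}$ or how many are nonzero, so the ratio of the two variances collapses to exactly $\tilde\gamma$ and the upper bound equals $\tfrac12\log\tilde\gamma=\gamma$. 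Chaining the displays then gives $D_{KL}(f_{kt}\,\|\,\phi)\le\gamma=D_{KL}(f_{\ell t}\,\|\,\phi_{\ell t})$; since $k$ and $\ell$ were arbitrary invalid and valid nodes, the conclusion holds in general.

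I expect the main obstacle to be the entropy power inequality step, for two reasons. Lemma~\ref{entPow} is stated only for a sum of \emph{two} independent variables, whereas $\Rkt$ may mix three or more components, so I would need either an iterative application or the standard $n$-summand generalization $e^{2\mathbf{h}(\sum_i Y_i)}\ge\sum_i e^{2\mathbf{h}(Y_i)}$, taking care that the entropy-matching of summands is preserved under the scalings $\delta_{ik}$. Once that is in place, the feature making everything cancel is precisely Assumption~\ref{scaleLocation}: without a common scale-location family the per-coordinate distances-to-Gaussianity would differ, $\tilde\gamma$ would be coordinate-dependent, and it could not be pulled out of the variance ratio, so I would pay special attention to verifying that Lemma~\ref{dist2gauss} indeed supplies a single universal constant.
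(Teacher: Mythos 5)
Your proposal is correct and follows essentially the same route as the paper's proof: Lemma~\ref{residsLinearCombo} for the residual structure, Lemma~\ref{kl2gauss} to convert the KL divergence to an entropy gap, the entropy power inequality to lower-bound $\mathbf{h}(\Rkt)$, and Lemma~\ref{dist2gauss} to pull the common factor $\tilde\gamma$ out of the variance ratio so the bound collapses to $\gamma$. Your remark about needing the $n$-summand extension of Lemma~\ref{entPow} is a fair point that the paper glosses over, but it does not change the argument.
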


\section{More Figures\label{moreFigs}}

\subsection{Sorting Time for Small Networks}
The general takeaway of Figure \ref{fig:sortTm} is that ScoreLiNGAM is generally much faster. Consider the largest DAG, the Andes network ($p=223$), where the sorting time of ScoreLiNGAM is typically under 1 second across all sample sizes, while for HighDimLiNGAM (parallelized across 7 threads) the sorting procedure takes between 10-1000 seconds across sample sizes. We note that ScoreLiNGAM is written with \texttt{C++} using the \texttt{Armadillo} linear algebra library and an \texttt{R} wrapper via the \texttt{Rcpp} package, while DirectLiNGAM is written in \texttt{Python} (\url{https://github.com/cdt15/lingam}) with a wrapper function in \texttt{R} using the \texttt{reticulate} package that is written by this paper's authors. HighDimLiNGAM is also written in \texttt{C++} (\url{https://github.com/ysamwang/highDNG}) with an \texttt{R} wrapper, but it searches regressor subsets when computing low-dimensional linear regressions--the likely reason for its slower time despite 7 parallel threads.  All simulations were run on a Dell XPS 13 with Intel Core™ i7-8550U CPU @ 1.80GHz × 8, 8 GB RAM, and 64-bit Ubuntu 20.04.3 LTS OS.

\begin{figure*}[ht]
\vskip 0.2in
\begin{center}
\centerline{\includegraphics[width=\textwidth]{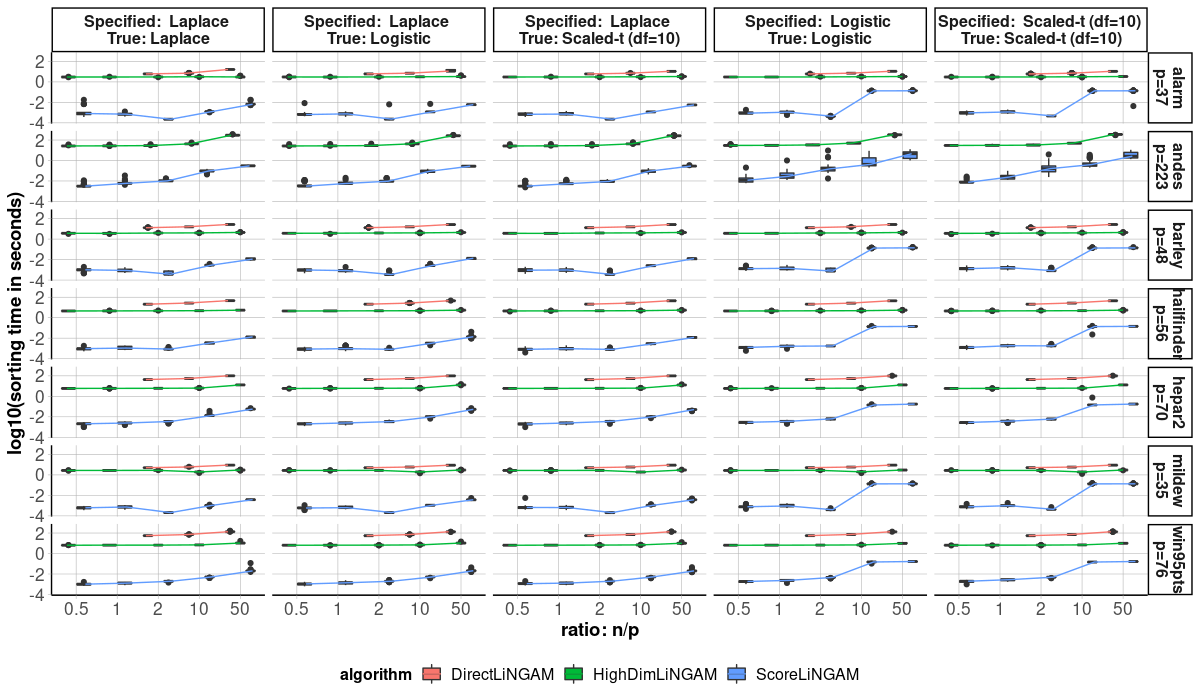}}
\caption{The simulation times for LiNGAM estimation procedures\label{fig:sortTm}}
\end{center}
\vskip -0.2in
\end{figure*}

\subsection{Sorting Times for Large Networks}

Figure \ref{fig:highDimResultsTm} contains the sorting times to go along with Figure \ref{fig:highDimResults} in the main text. 
\begin{figure}[ht]
\vskip 0.2in
\begin{center}
\centerline{\includegraphics[width=0.5\textwidth]{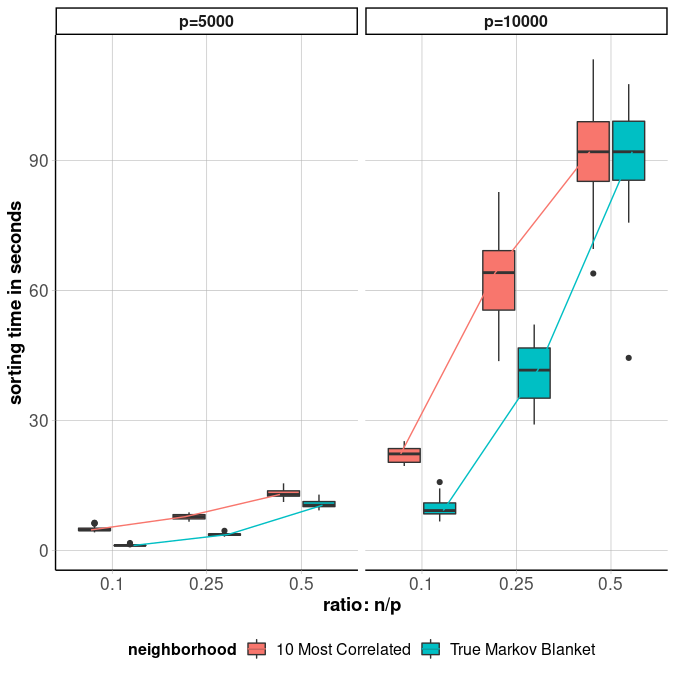}}
\caption{Sorting times for ScoreLiNGAM under $p=5000,10000$ and $n=0.1p,0.25p,0.5p$. Color indicates how the neighborhood sets are constructed.}
\label{fig:highDimResultsTm}
\end{center}
\vskip -0.2in
\end{figure}

\section{The sorting algorithm in practice}

In Algorithm \ref{algoInPractice}, we present further pseudo-code for ScoreLiNGAM's sorting procedure in practice, which uses partial regression.

\begin{algorithm}\ \\
\caption{The sorting procedure in practice}\label{algoInPractice}
\KwData{$\X\in\mathbb{R}^{n\times p}$ (standardized), $\{\MBkhat\}_{k=1}^p$}
\KwResult{$\hat{\pi}(1),\hat{\pi}(2),\dots,\hat{\pi}(p)$}
\# initialize mixing matrix\\
$\mathbf{M}\gets \mathbb{I}_{p \times p}$\\
\# initialize residual matrix\\
$\mathbf{R}\gets \X$\\
\# initialize scores\\
$s_k\gets \mathcal{S}(k;\mathbf{R}),k=1,2,\dots,p$.\\
\# sort the nodes\\
\For{$t=1,2,\dots,p+1$}{
    $\hat{\pi}(t)\gets \arg\max_{k\not\in\At}s_k$\\
    \# update residuals for neighbors of selected node.\\
    \For{$k \in \widehat{N}_{ \hat{\pi}(t) }\backslash\At$}{
        \# update residuals with partial regression.\\
        \For{$a\in \{j:\ \mathbf{M}_{\hat{\pi}(t)j}\neq 0,\mathbf{M}_{kj}=0\}$}{
            $\mathbf{M}_{ka}\gets (\mathbf{R}_{\cdot a}^T\mathbf{R}_{\cdot a})^{-1}\mathbf{R}_{\cdot a}^T\mathbf{R}_{\cdot k}$\\
            $\mathbf{R}_{\cdot k}\gets \mathbf{R}_{\cdot k}-\mathbf{M}_{ka}\mathbf{R}_{\cdot a}$\\
        }
        \# update score\\
        $s_k\gets \mathcal{S}(k;\mathbf{R})$\\
    }
}
\end{algorithm}

\subsection{Obtaining the scale-parameter for Emprical Mean Log-likelihood in \eqref{samplellr}\label{scaleParameterEstimation}}
As discussed in the main text, our sequential algorithm at step $t\geq1$ in practice requires the estimation of the scale parameter, $\eta_{kt}$, in Equation \eqref{samplellr}. Here, we discuss the estimator for the three parametric assumptions used in this paper. We make use of the respective definitions and properties in \citet{Forbes2010-hp}.

\begin{itemize}
    \item \textbf{Laplace Distribution:} If $\eps_k\sim\text{Laplace}(0,\theta_k)$, we have that $\theta_k=\E[|\eps_k|]$ is the scale parameter. When $g(\cdot;\eta_{kt})$ is specified as the density for $\text{Laplace}(0,\eta_{kt})$, the maximum likelihood estimator we use in practice is $\hat{\eta}_{kt}=\frac{1}{n}\norm{\hat{R}_{kt}}_1$.
    \item \textbf{Logistic Distribution:} If $\eps_k\sim\text{Logistic}(0,\theta_k)$, then $\theta_k$ is the scale parameter. We have that $\var[\eps_k]=\frac{\pi^2}{3}\theta_k^2$. When $g(\cdot;\eta_{kt})$ is specified as the density for $\text{Logistic}(0,\eta_{kt})$, we find that the plug-in estimator $\hat{\eta}_{kt}=\frac{\sqrt{3}}{\pi}\hat{\sigma}_{kt}$ to work satisfactorily.
    \item \textbf{Scaled-t Distribution:} If $\eps_k\sim\text{Scaled-t}(0,\nu,\theta_k)$, then we say $\eps_k$ is equal in distribution to the scale parameter, $\theta_k$, times $U\sim \text{t}(0,\nu)$, a Student's t-distributed random variable having mean $0$ and degrees of freedom $\nu>0$. That is, $\eps_k\overset{d}{=}\theta_k U$. For $\nu>2$, we have that $\var[\eps_k]=\theta_k^2\left(\frac{\nu}{\nu-2}\right)$. When $g(\cdot;\eta_{kt})$ is specified as the density for $\text{Scaled-t}(0,\nu,\eta_{kt})$ with $\nu>2$ assumed to be known, we find that the plug-in estimator $\hat{\eta}_{kt}=\hat{\sigma}_{kt}\sqrt{\frac{\nu-2}{\nu}}$ to work satisfactorily.
\end{itemize}
In equation \eqref{samplellr} and in the plug-in estimators for the Logistic and Scaled-t specifications, we use 
\[
\hat{\sigma}_{kt}^2=\frac{1}{n}\norm{\hat{R}_{kt}}_2^2.
\]

\

\end{document}